\documentclass[conference]{IEEEtran}
\ifCLASSINFOpdf
\else
\fi
%
%
\DeclareMathAlphabet{\mathpzc}{OT1}{pzc}{m}{it}  
\usepackage{mathrsfs}
\usepackage{amsthm}
\usepackage{makeidx}
\usepackage{examples}
\usepackage[pdftex]{graphicx}  
\usepackage{amssymb}
\setcounter{tocdepth}{3}
\usepackage{graphicx}
\usepackage{algorithm}
\usepackage{epstopdf}
\usepackage{color}
\usepackage{url}
\graphicspath{{../eps/}{../ps/}}
\usepackage{psfrag}    
\usepackage{algorithm}
\usepackage{algorithmic}
\usepackage{amsmath}
\usepackage{hyperref}

\usepackage[numbers,sort&compress,square]{natbib}
\usepackage[utf8]{inputenc}
\usepackage{tikz}

\font\msbm=msbm10 at 10pt
\newcommand{\ZZ}{\mbox{\msbm Z}}

\newcommand{\NN}{\mbox{\msbm N}}

\newcommand{\FF}{\mbox{\msbm F}}
\def \Z {{\ZZ}}

\def \N {{\NN}}
\def \F {{\FF}}

\newtheorem{theorem}{Theorem}
\newtheorem{lemma}[theorem]{Lemma}
\newtheorem{remark}[theorem]{Remark}
\newtheorem{example}[theorem]{Example}
\newtheorem{definition}[theorem]{Definition}
\newtheorem{problem}[theorem]{Problem}

\hyphenation{op-tical net-works semi-conduc-tor}
\begin{document}
%
\title{Tradeoff for Heterogeneous Distributed Storage Systems between Storage and Repair Cost}
\author{\IEEEauthorblockN{Krishna Gopal Benerjee and Manish K. Gupta}
\IEEEauthorblockA{Laboratory of Natural Information Processing,\\
Dhirubhai Ambani Institute of Information and Communication Technology Gandhinagar, Gujarat, 382007 India\\
Email: krishna\_gopal@daiict.ac.in, mankg@computer.org}
}


%


\maketitle

\begin{abstract}
In this paper, we consider heterogeneous distributed storage systems (DSSs) having flexible reconstruction degree, where each node in the system has dynamic repair bandwidth and dynamic storage capacity. In particular, a data collector can reconstruct the file at time $t$ using some arbitrary nodes in the system and for an arbitrary node failure the system can be repaired by some set of arbitrary nodes. 
Using $min$-$cut$ bound, we investigate the fundamental tradeoff between storage and repair cost for our model of heterogeneous DSS. In particular, the problem is formulated as bi-objective optimization linear programing problem. For an arbitrary DSS, it is shown that the calculated $min$-$cut$ bound is tight. 
\end{abstract}


%
\IEEEpeerreviewmaketitle

\section{Introduction} 
 Cloud storage is a distributed storage system (DSS) in which information is stored on distinct nodes as encoded packets in a redundant manner. One can retrieve the file by contacting certain nodes in the system. In case of node failure, it can be repaired using other nodes in the system. For such DSSs, one has to optimize various parameters in the system such as storage capacity, repair bandwidth, availability, reliability, security and scalability. Such DSSs are used by many commercial systems like Facebook, Yahoo, IBM, Amazon and Microsoft Windows Azure system\cite{XorbasVLDB,Huang:2012:ECW:2342821.2342823,skydrive,amazonec2}.

In homogeneous DSSs (where each node has same storage capacity and same repair degree) \cite{dgwr7}, encoded data packets of a file with size $B$ are distributed among $n$ nodes (each having storage capacity $\alpha$) such that connecting any $k(<n)$ nodes, one can retrieve the whole file. In the case of any arbitrary node failure, system is repaired by downloading $\beta$ packets from any $d (<n)$ nodes, called helper nodes \cite{dgwr7}. In these systems, one can provide reliability by simply replicating or encoding the massage data packets. In the case of simple replication, storage minimization is inefficient. On the other hand, encoding of data packets using erasure MDS (maximum distance separable) codes leads to inefficiecy for bandwidth minimization during node repair process.
To optimize these conflicting parameters, in a seminal work Dimakis et. al \cite{dgwr7} introduced regenerating codes.
 In \cite{5550492,capacity}, tradeoff between storage capacity $\alpha$ and repair bandwidth $d\beta$ is analyzed by plotting tradeoff curve for regenerating code. All points on the tradeoff curve can be obtained by linear network codes over finite fields \cite{5206008, journals/jsac/Wu10}. In the tradeoff curve, by minimizing both parameters in different order, Minimum Bandwidth Regenerating (MBR) codes and Minimum Storage Regenerating (MSR) codes are obtained \cite{5550492}. Tradeoff between storage and repair bandwidth for exact-repair is studied in \cite{6874990}. In \cite{5513353}, Shah et al calculated cut-set lower bound on repair bandwidth for a special flexible setting  for homogeneous DSS. In a nice survey \cite{DARKYC11}, an overview of some existing results and repair models on DSS are explored. Recently in \cite{2015arXiv150103983P}, the tradeoff between storage capacity and repair bandwidth is investigated for exact repair linear regenerating codes for $k=d=n-1$. 

\begin{figure}
\centering
\includegraphics[scale=0.25]{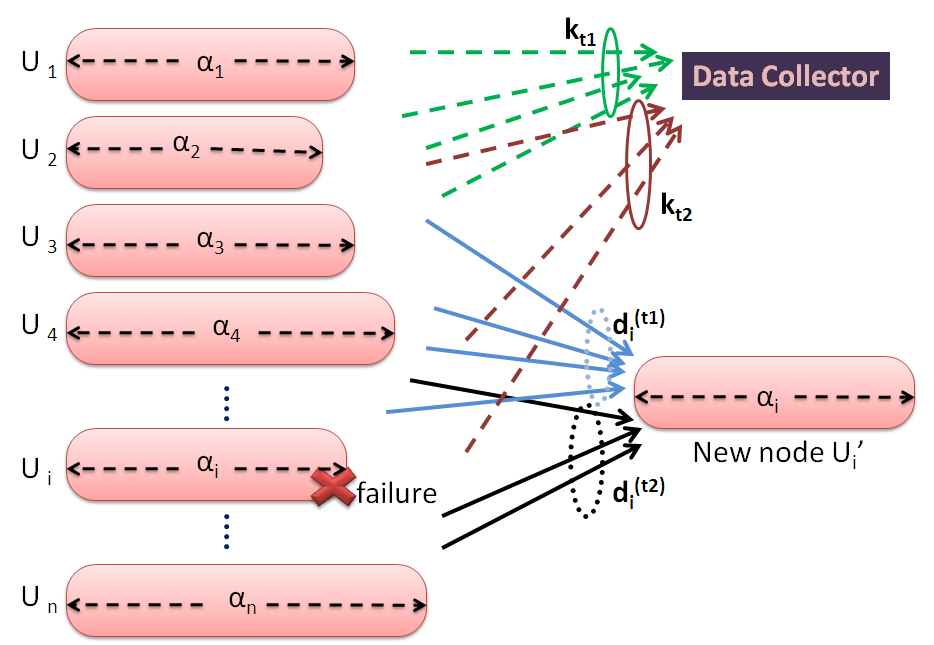}
\caption{A model of considered heterogeneous DSS is given here. In the heterogeneous DSS each node has flexible storage capacity $\alpha_i$ $(i\in\{1,2,\ldots,n\})$ and repair bandwidth. In the system, at time $t$ (in particular, $t\in\{t_1, t_2\}$) flexible reconstruction degree for a data collector, is $k_t$. Repair degree for an arbitrary node failure is also dynamic with respect to time. At time $t$, a failed node $U_i$ is repaired by some $d_i^{(t)}$ nodes.}
\label{example model}
\end{figure}
Heterogeneous DSSs are more close to real world scenarios where characterization of all storage nodes in various aspects are not necessarily uniform due to geographical environment and storage devices cost etc. 

 Many such heterogeneous DSS have been studied recently \cite{Kubiatowicz00oceanstore:an,DBLP:journals/ett/BianchiM96,DBLP:conf/acssc/PawarRZLR11}. In \cite{DBLP:journals/corr/abs-1202-1596,6736541, 6170563}, storage allocation problem is investigated to maximizes the probability of successful recovery. 
 For heterogeneous DSS, \cite{6033777} proved that repair cost can be reduced by allowing helper nodes to encode the codewords of other nodes. In \cite{Akhlaghi20102105}, Akhlaghi et al investigated the tradeoff between storage capacity and repair bandwidth for the generalized regenerating codes and shown that each point on curve is achievable. In the generalized regenerating code, set of all nodes is divided into two partitions. Every node in each partition has uniform parameters $(\alpha_i, d_i, \beta_i)$ ($\forall i\in\{1,2\}$) \cite{Akhlaghi20102105}. 
In ~\cite{6620426}, Ernvall et al calculated the capacity bounds of a heterogeneous DSS having dynamic repair bandwidth. The tradeoff curve is explored for non-homogeneous two rack model of DSS in \cite{6620424}. In \cite{DBLP:journals/corr/BenerjeeG14} capacity bound is calculated for heterogeneous DSSs with dynamic repair bandwidth, where node repair is done by some specific helper nodes. For the heterogeneous DSS, the storage node capacity depends on the repair bandwidth of each rack. In \cite{ETT:ETT2887}, tradeoff between system storage cost and system repair cost is investigated for heterogeneous DSSs with dynamic storage and repair cost.

\begin{figure}
\centering
\includegraphics[scale=0.27]{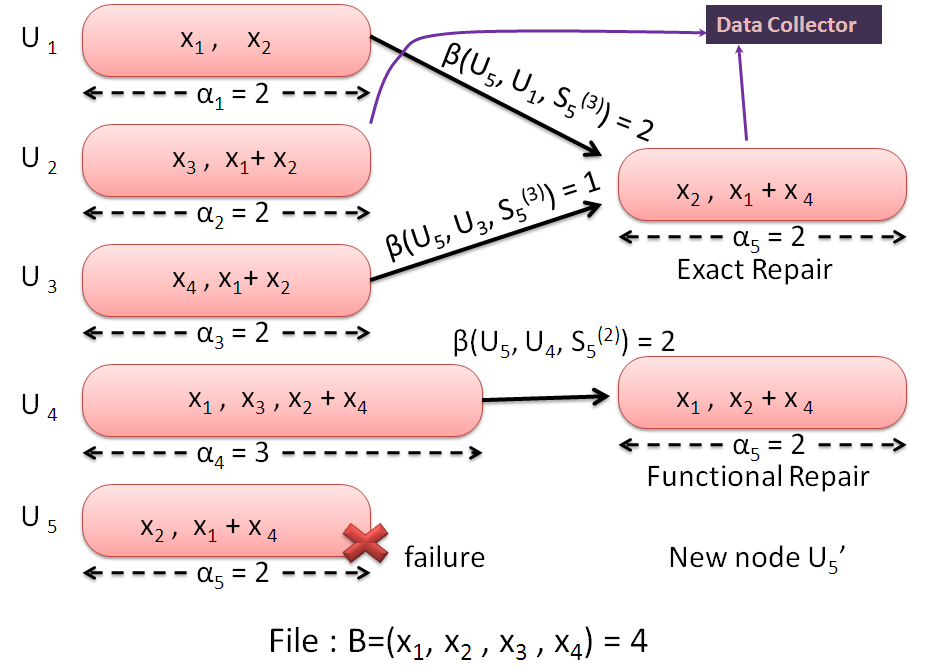}
\caption{A file with size $4$ units ($=B$) is divided into $11$ encoded packets on field $\F_q$. These packets are distributed among $6(=n)$ nodes in such a way that any data collector can download whole file by contacting at most $3(=k)$ nodes. In this heterogeneous DSS, $\vec{\alpha}=(\alpha_1, \alpha_2, \alpha_3, \alpha_4, \alpha_5)=(2\ 2\ 2\ 3\ 2)$. A failed node can be repaired by at most $2 (=d)$ nodes. Functional and exact repairs are shown for the node failure of $U_5$ with the help of surviving sets $S_5^{(2)}=\{U_4\}$ and $S_5^{(3)}=\{U_1, U_3\}$. Surviving set $S_5^{(3)}$ is not considered in Table \ref{example table} since $S_5^{(1)}\subsetneqq S_5^{(3)}$.}
\label{example}
\end{figure}

In this work, we consider a heterogeneous DSS where a file of size $B$ is distributed among $n$ nodes each with different storage capacities. File reconstruction is done 
in a flexible manner, where at any time instant $t$, data collector can reconstruct the file by connecting  some $k_t$ number of nodes. Hence the reconstruction degree $k_t$ for a file is flexible with respect to time and the number of nodes. On the other hand, in case of a node failure $U_i(1\leq i\leq n)$, it can be repaired at time $t$ by downloading packets from $d_i^{(t)}$ number of some nodes. Hence the repair degree $d_i^{(t)}$ is also flexible with respect to time and the number of nodes. Repair of failed node can be done in two ways, exact repair and functional repair. If the recovered packets in repair process is exact copy of lost packets then it is called \textit{exact repair}. On the other hand, if the recovered packets is some function of lost packets then the repair is \textit{functional repair}. The model of such heterogeneous DSS is shown in Figure \ref{example model}. A data collector reconstructs a distributed file by connecting $k_{t_j}(j\in\{1,2\})$ number of nodes at time $t_j$. In addition, a failed node $U_i$ repairs by $d_i^{(t_j)}$ number of some nodes at time $t_j$. 

An example of such heterogeneous DSS is considered in Figure \ref{example}. In this system, a file $B$ is divided into $4$ massage information packets $x_1$, $x_2$, $x_3$ and $x_4$. The massage information packets are encoded into $11$ packets by taking linear combination of massage information packets as $y_1 = x_1$, $y_2 = x_2$, $y_3 = x_3$, $y_4 = x_1+x_2$, $y_5 = x_4$, $y_6 = x_1+x_2$, $y_7 = x_1$, $y_8 = x_3$, $y_9 = x_2+x_4$, $y_{10} = x_2$ and $y_{11} = x_1+x_4$. The encoded packets $y_m(m\in[11])$ are distributed on the $5$ nodes such that packets $y_1$ and $y_2$ are stored on node $U_1$, packets $y_3$ and $y_4$ are distributed on node $U_2$, packets $y_5$ and $y_6$ are on node $U_3$, packets $y_7$, $y_8$ and $y_9$ are on node $U_4$ and remaining two packets are on node $U_5$. Clearly the storage node capacity $\alpha_i$ = $2(i\in[5]\backslash\{4\})$ and $\alpha_4$ = $3$. In this example, if node $U_5$ fails then it can be repaired by downloading packets $y_7$ and $y_9$ from node $U_4$. Since the recovered packets are function of lost packets so it is functional repair. On the other hand, node $U_5$ can be repaired exactly by downloading packets $y_1$, $y_2$ and $y_5$ from nodes $U_1$ and $U_3$ and solving $y_{10}=y_2$ and $y_{11}=y_1 + y_5$.


\textit{Contribution}: In this paper, we have calculated $min$-$cut$ bound for the considered heterogeneous DSS. 
For such heterogeneous DSS, we have established a bi-objective optimization linear programing problem subject to $min$-$cut$ bound. The solutions of the LP problem are plotted as a tradeoff curve between system storage and repair cost. In a heterogeneous DSS, system storage cost and system repair cost are average costs to store and repair unit information data on a node respectively. We have plotted some tradeoff curve and compared it with tradeoff for heterogeneous DSS as considered in \cite{ETT:ETT2887} and homogeneous DSS as investigated in \cite{capacity}. Some specific cases are investigated for the established bi-objective optimization problem.

\textit{Organization}: The paper is organized as follows. Section 2 collects the required preliminary concepts and describes our model. Section 3 investigates the $min$-$cut$ bound for our model. Under the constraints of the $min$-$cut$ bound, we also establish a bi-objective linear optimization problem to plot the tradeoff curve between storage and repair costs per node. Finally Section 4 concludes the paper with general remarks.
\section{Preliminaries}\label{2}    
In this paper, we focus our attention to heterogeneous DSS with parameters ($n,k,d$), where file is distributed among $n$ nodes, $k=\max_t\{k_t\}$ is the maximum reconstruction degree for the file, $d_i=\max_t\{d_i^{(t)}\}$ is the \textit{maximum repair degree} for a node $U_i$ at all time and $d=\max_i\{d_i\}$ is the \textit{maximum repair degree} among all nodes at any time. 
For each time $t$, one can define a \textit{reconstruction set} $\mathcal{A}_t$ as collections of the nodes having sufficient packets to reconstruct the file $i.e.$ $\mathcal{A}_t=\left\{U_{t_1}, U_{t_2},\ldots,U_{t_{k_t}}\right\}$. Clearly $|\mathcal{A}_t|=k_t$ and intersection of any two reconstruction set may be non-empty. Define $\mathscr{A}=\left\{\mathcal{A}_1,\mathcal{A}_2,\ldots,\mathcal{A}_t,\ldots\right\}$ as a set of all reconstruction sets. Note that the set $\mathscr{A}$ will be finite if all reconstruction sets $\mathcal{A}\in\mathscr{A}$ are distinct. Hence $\exists\ \omega\in\N$ such that $|\mathscr{A}|=\omega$.
For the considered example in Figure \ref{example}, $\mathscr{A}=\{\mathcal{A}_i:\forall i\in[7]\}$, where $\mathcal{A}_1=\{U_1, U_2, U_3\}$, $\mathcal{A}_2=\{U_1, U_3, U_5\}$, $\mathcal{A}_3=\{U_1, U_4\}$, $\mathcal{A}_4=\{U_2, U_4\}$, $\mathcal{A}_5=\{U_2, U_5\}$, $\mathcal{A}_6=\{U_3, U_4\}$ and $\mathcal{A}_7=\{U_4, U_5\}$. 
In the heterogeneous DSS, at time $t$, if a node $U_i\ (i \in [n])$ fails then certain nodes called helper nodes, download required packets and generate a new node say $U_i'$. The new node $U_i'$ replaces the failed node $U_i$ and the system is repaired. In particular, set of those helper nodes are called \textit{surviving set}. For a node $U_i$, let the number of distinct surviving sets are $\tau_i$. At the time instant $t$, indexing the surviving set by $\ell$, one can denote them by $S_i^{(\ell)}\stackrel{\triangle}{=} \left\{U_j:\mbox{ some }j\in\{1,2,...,n\}\backslash \{i\}\right\}$, where $\ell\in[\tau_i]$ \cite{DBLP:journals/corr/BenerjeeG14}. If a node failure $U_i$ repairs by nodes of surviving set $S_i^{(\ell)}$ then \textit{repair degree at the time instant $t$}, is $d_i^{(t)}=|S_i^{(\ell)}|$. Surviving sets for the heterogeneous DSS considered in Figure \ref{example} is listed in Table \ref{example table}. In this example, one can see that if a node $U_4$ fails then it can be repaired by connecting nodes $U_2$ and $U_3$ or nodes $U_2$ and $U_5$. Hence surviving sets for the node $U_4$ are $S_4^{(1)}$ and $S_4^{(2)}$. In Table \ref{example table}, for a given $i(i\in[5])$, $|S_i^{(\ell)}|$ is identical for all $\ell\in[\tau_i]$. In general, it may not be true. 
Also note that, in the table, we have chosen those surviving sets which are not the super set of other surviving set for the same node failure. In particular, the condition ensures the active participation of each node of an arbitrary surviving set during system repair process.
\begin{table}[ht]
\caption{Surviving sets for nodes in DSS as considered in Figure \ref{example}.}
\centering 
\begin{tabular}{|c||c|c|}
\hline
Nodes&Surviving sets                                                                        &  \# sets    \\[0.5ex]                                     
$U_i$& $S_i^{(\ell )}$                                                                      &   $\tau_i$  \\
\hline\hline
$U_1$&$S_1^{(1)}=\{U_2, U_4\}, S_1^{(2)}=\{U_2, U_5\}, $                                    &   5          \\
     &$S_1^{(3)}=\{U_3, U_4\}, S_1^{(4)}=\{U_3, U_5\}$, $S_1^{(5)}=\{U_4, U_5\}$.           &              \\
\hline 
$U_2$&$S_2^{(1)} = \{U_1, U_4\}, S_2^{(2)} =  \{U_3, U_4\},  S_2^{(3)} = \{U_4, U_5\}.$     &   3          \\
\hline 
$U_3$&$S_3^{(1)} = \{U_4\}, S_3^{(2)} =  \{U_5\}$.                                          &   2          \\
\hline 
$U_4$&$ S_4^{(1)} =  \{U_2, U_3\}, S_4^{(2)} =  \{U_2, U_5\}.$                              &   2          \\
\hline 
$U_5$&$ S_5^{(1)} =  \{U_3\}, S_5^{(2)} =  \{U_4\}.$                                        &   2          \\
\hline
\end{tabular}
\label{example table}
\end{table}
 
In brief, for a failed node $U_i$, if system is repaired by nodes of specific surviving set $S_i^{(\ell)}$ then the number of information packets downloaded by node $U_j\in S_i^{(\ell)}$ will be given by $\beta\left(U_i, U_j, S_i^{(\ell)}\right)>0$. For example, in Figure \ref{example}, all two packets from node $U_5$ and packet $y_3 = x_3$ from node $U_2$ is downloaded to repair node failure $U_4$. Note that $U_2,U_5\in S_4^{(2)}$. Hence $\beta\left(U_4,U_5,S_4^{(2)}\right)$ = $2$ and $\beta\left(U_4,U_2,S_4^{(2)}\right)$ = $1$.

If a failed node $U_i$ $(i\in[n])$ is repaired by nodes of surviving set $S_i^{(\ell)}$ then \textit{repair bandwidth} (denoted by $\gamma\left(U_i, S_i^{(\ell)}\right)$) for the node $U_i$ is the total number of packets downloaded by every nodes of the surviving set $S_i^{(\ell)}$. Mathematically


\begin{equation}
\gamma\left(U_i, S_i^{(\ell)}\right)=\sum_{\stackrel{j}{\mbox{\small such that }U_j\in S_i^{(\ell)}}}\beta\left(U_i, U_j, S_i^{(\ell)}\right).
\label{repair bandwidth}
\end{equation}
For example, in Figure \ref{example}, if node $U_5$ fails and it is repaired by nodes of surviving set $S_5^{(3)}$ (not considered in Table \ref{example table} since $S_5^{(1)}\subsetneqq S_5^{(3)}$) then 
\newline $\gamma\left(U_5, S_5^{(2)}\right)$= $\beta\left(U_5,U_1,S_5^{(3)}\right)$ + $\beta\left(U_5,U_3,S_5^{(3)}\right)$ = $2 + 1 =3$ units.

\begin{remark}
In this paper, at time instant $t$, single node failure is considered because simultaneously multi-node failures can be assumed as a sequence of single node failure. 
\end{remark}

\begin{remark}
One can find the tradeoff curve between repair cost and storage cost by optimization Problem \ref{equation} for the exact or functional repair using the surviving sets as the collection of those helper nodes which repair failed nodes as exact or functional respectively. 
\end{remark}

\begin{remark}
One can modify our heterogeneous DSS model by allowing some data collectors to reconstruct file separately at same time instant $t$ with flexible reconstruction degree each. For the particular modified model, the tradeoff curve between repair and storage cost can be plotted using optimization Problem \ref{equation}, if any two data collectors are not connected with some common node.
\end{remark}

To plot the tradeoff curve between storage capacity $\alpha$ and repair bandwidth $d\beta$ in a homogeneous DSS, Wu et al \cite{capacity} solved an optimization problem with constraint of $min$-$cut$ bound between the parameters. The bound is calculated by analyzing the information flow graph for the homogeneous DSS \cite{capacity}. In the similar manner, one can plot the trade off curve for our model. We consider the information flow graph (acyclic weighted directed graph $\mathcal{G}=(\mathcal{V}, \mathcal{E})$) \cite{ETT:ETT2887, DBLP:journals/corr/BenerjeeG14} for heterogeneous DSS as described in Figure \ref{flow graph}.

 For a heterogeneous DSS, at time $t$, the information flow graph $\mathcal{G}$ as shown in Figure \ref{flow graph}, is divided into $k_t+3$ ($k_t$ being flexible reconstruction degree for data collector at time $t$) steps, starting from step label $-1$ to label $k_t+1$. Step label $-1$ contains source node say $``s"$ and step label $k_t+1$ contains data collector node say $``D"$. A typical node $U_{\lambda_i}$ $(\forall i\in [n])$ in heterogeneous DSS, is mapped to a pair of vertices $``In_{\lambda_i}"$ and $``Out_{\lambda_i}"$ in $\mathcal{V}$ $s.t.$ $(In_{\lambda_i}, Out_{\lambda_i})\in\mathcal{E}$, where $\lambda_i$ is permute index on nodes. Storage capacity $\alpha_{\lambda_i}$ of node $U_{\lambda_i}$ is mapped to $w(In_{\lambda_i}, Out_{\lambda_i})$, where $w(In_{\lambda_i}, Out_{\lambda_i})$ is weight associated with edge $(In_{\lambda_i}, Out_{\lambda_i})\in\mathcal{E}$. In graph $\mathcal{G}$ as given in Figure \ref{flow graph}, at step label $0$, there are $2n$ number of vertices named $In_{\lambda_i}$ and $Out_{\lambda_i}$ associated with nodes $U_{\lambda_i}$ ($i\in[n]$) in heterogeneous DSS. 

\begin{figure}
\centering
\includegraphics[scale=0.27]{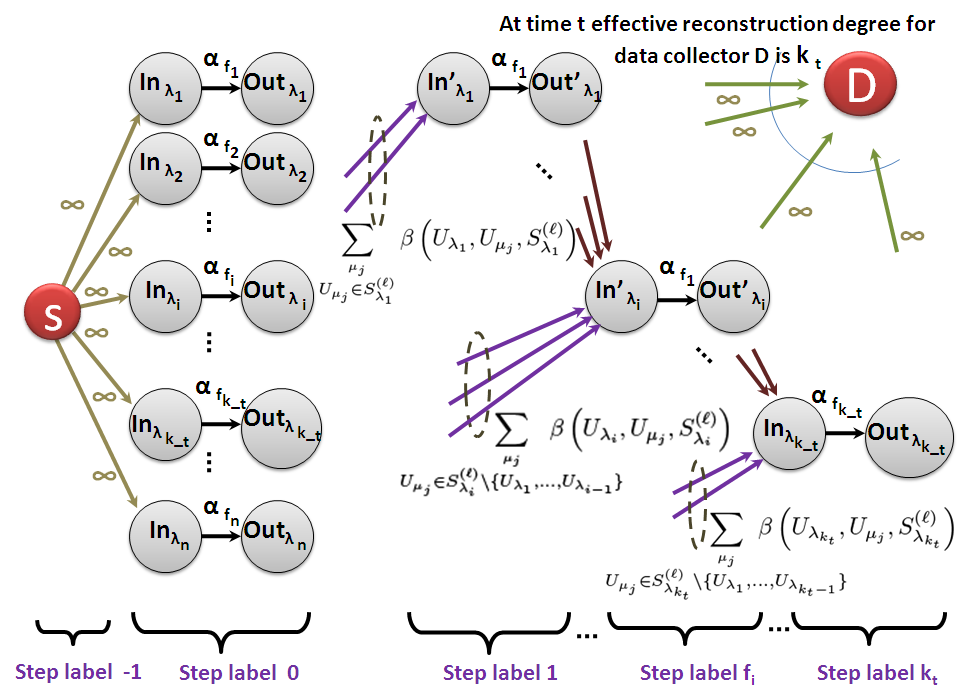}
\caption{Information flow graph $\mathcal{G}=(\mathcal{V}, \mathcal{E})$ for a heterogeneous DSS. The graph is divided into $k_t+3$ ($k_t$ is flexible reconstruction degree at time $t$ associated with data collector node ``$D$") step labels. Node $U_{\lambda_i}$ $(\forall i\in[n])$ in heterogeneous DSS is represented by a pair of nodes $In_{\lambda_i}$ and $Out_{\lambda_i}$. Sours nose ``$s$" is in step label $-1$, data collector node ``$D$" is in label $k_t+1$. Step label $0$ has $n$ pairs of nodes $In_{\lambda_i}$ and $Out_{\lambda_i}$. Each step from label $1$ to $k_t$ has one pair of nodes $In_{\lambda_j}$ and $Out_{\lambda_j}$ ($\forall j\in[k_t]$). Mansion that node ``$D$" (data collector node) at right upper corner is in step label $k_t+1$, in place of step label $k_t$.}
\label{flow graph}
\end{figure}

	 A failed node $U_{\lambda_i}$ ($i\in[n]$) in heterogeneous DSS, is repaired by generating new node $U_{\lambda_i}'$. The node $U_{\lambda_i}'$ is mapped to a new pair of nodes $In'_{\lambda_i}$ and $Out'_{\lambda_i}$  $s.t.$ $(In'_{\lambda_i}, Out'_{\lambda_i})\in\mathcal{E}$ with $w(In'_{\lambda_i}, Out'_{\lambda_i}) = \alpha_{\lambda_i}$. Every step label $j\in[k_t]$ contains one pair of nodes $In'_{\lambda_j}$ and $Out'_{\lambda_j}$. As shown in Figure \ref{flow graph}, in the heterogeneous DSS, system is repaired for the node failure $U_{\lambda_j}$ by downloading $\beta\left(U_{\lambda_j}, U_{\mu_p}, S_{\lambda_j}^{(\ell)}\right)$ amount of data from every node $U_{\mu_p}\left(\in S_{\lambda_j}^{(\ell)}=\{U_{\mu_p}: j\in[d_{\lambda_j}^{(t)}] \}\right)$, where $\mu_p$ is some permutation on nodes. For the particular system repair, each downloading process maps by one distinct edge from some previous step label to step label $j$ $s.t.$ $(Out'_{\mu_p}, In'_{\lambda_j})\in \mathcal{E}$ with $w(Out'_{\mu_p}, In'_{\lambda_j}) = \beta\left(U_{\lambda_j}, U_{\mu_p}, S_{\lambda_j}^{(\ell)}\right)$.
	In particular, if node $Out'_{\mu_p}$ does not exist then consider $Out_{\mu_p}$ from step label $0$ $s.t.$ $(Out_{\mu_j}, In'_{\lambda_j})\in\mathcal{E}$. In graph $\mathcal{G}$ exactly one node failure is considered in each step label.
	
	A data collector $D$ connects $k_t$ number of nodes of $\mathcal{A}_t$ = $\{U'_{\lambda_1}$, $U'_{\lambda_2}$,...,$U'_{\lambda_j}$,...,$U'_{\lambda_{k_t}}\}$. In information graph as in Figure \ref{flow graph}, data collector $D$ connects nodes $Out'_{\lambda_j} (\forall j\in[k_t])$ from step label $1$ to step label $k_t$ and downloads certain data file then $(Out'_{\lambda_j}, D)\in\mathcal{E}$ such that $w(Out'_{\lambda_j}, D)\rightarrow\infty$. 

\begin{example}
At time instant $t$, for the heterogeneous DSS in Figure \ref{example}, an example of information flow graph is shown in Figure \ref{flow graph example}. In particular, a data collector is connected with the nodes of $\mathcal{A}_1$ = $\{U_1$, $U_2$, $U_3\}$. In the information flow graph, if the nodes are failed then it will be  repaired by nodes of $S_1^{(1)}, S_2^{(1)}$ and $S_3^{(1)}$ respectively.
\end{example}
	
\begin{figure}
\centering
\includegraphics[scale=0.26]{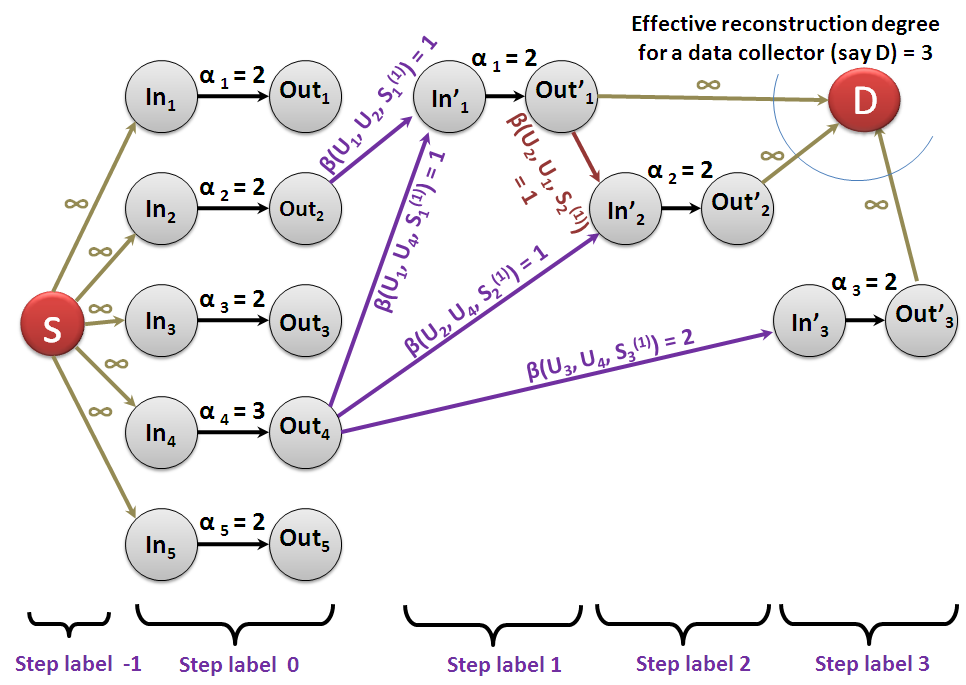}
\caption{For a heterogeneous DSS as considered in Figure \ref{example}, a information flow graph is shown for a specifics data collector connects with the nodes of $\mathcal{A}_1$=$\{U_1, U_2, U_3\}$. The particular information flow graph is plotted for surviving sequence $\left\langle S_1^{(1)}, S_2^{(1)}, S_3^{(1)}\right\rangle\in \mathscr{S}\left( \left\langle U_1, U_2, U_3\right\rangle \right)$. Mansion that node ``$D$" at right upper corner is in step label $4$, in place of step label $3$.}
\label{flow graph example}
\end{figure}
	

In \cite{ETT:ETT2887, capacity, DBLP:journals/corr/BenerjeeG14, 6620426}, $min$-$cut$ bound is calculated by analyzing flow passes through source node $s$ to data collector node $D$ across the information flow graph for a DSS. In the similar manner flow analysis is done for the model considered in this paper. Hence one can define flow across the information flow graph as follows.

\begin{definition}
A function $f:\mathcal{E}\rightarrow [0,\infty)$ is called flow on a information flow graph $\mathcal{G}=(\mathcal{V}, \mathcal{E})$ if,
\begin{enumerate}
	\item (capacity constraint:) $\forall (x,y)\in\mathcal{E}$,	$f((x,y))\leq c((x,y))$, where $c((x,y))=w(x,y)$ and $c((x,y))$ is capacity of edge $(x,y)$.
	\item (flow conservation constraint:) $\forall y\in\mathcal{V}\backslash\{s,t\}$,
	\begin{equation*}
	\sum_{\stackrel{x}{(x,y)\in\mathcal{E}}}f((x,y))=\sum_{\stackrel{z}{(y,z)\in\mathcal{E}}}f((y,z)).
	\end{equation*}
\end{enumerate}
\end{definition}

For more details and example on flow function, cite \cite{Ahlswede00networkinformation, Elias:1956}.

For a given information flow graph $\mathcal{G}=(\mathcal{V}, \mathcal{E})$, value of flow delivered to a data collector node say $D$ is defined as total amount of flow passes through the edges $(x,D)\in\mathcal{E}$ for all possible $x\in\mathcal{V}$.
	 For networks, maximum possible value of flow delivered to $D$ is governed by \textit{min-cut max-flow} theorem \cite{Ahlswede00networkinformation, Elias:1956, Ford-Fulkerson_algo}.
 \textit{Min-cut max-flow} theorem says that across the network, maximum possible value of flow passes from source $s$ to specific data collector $D$ denoted by $max\mbox{-}flow(s,D)$, is equal to minimum $cut$-$capacity(s,D)$, where 
\begin{equation*}
\min cut\mbox{-}capacity(s,D)=\min_{\stackrel{cut(\mathcal{X},\overline{\mathcal{X}});}{\stackrel{s\in\mathcal{X},D\in\overline{\mathcal{X}};}{\mathcal{X}\cup\overline{\mathcal{X}}=\mathcal{V}.}}}\left\{cut\mbox{-}capacity(\mathcal{X},\overline{\mathcal{X}})\right\}.   
\end{equation*}
 Note that $cut(\mathcal{X},\overline{\mathcal{X}})$ represents the set of all edges having one end vertex in set $\mathcal{X}$ and other vertex in set $\overline{\mathcal{X}}$ such that removing those all edges will improve the number of components in graph $\mathcal{G}=(\mathcal{V},\mathcal{E})$. Here $cut$-$capacity(\mathcal{X},\overline{\mathcal{X}})$ is the sum of capacity of all edges in $cut(\mathcal{X},\overline{\mathcal{X}})$. At time $t$, for a specific data collector $D$ which connects nodes $U_{\lambda_i}$ of set $\mathcal{A}_t\in\mathcal{A}$, has $|\mathcal{A}_t|\,!\prod_{i=1}^{|\mathcal{A}_t|}\tau_{\lambda_i}$ number of distinct information flow graphs are exist. For every information flow graph $\mathcal{G}=(\mathcal{V}, \mathcal{E})$, $D$ can recover the whole file $B$ so 
\begin{equation*}
B\leq\min_{\mathcal{G}}\max\mbox{-}flow(s,D).
\end{equation*}
 By \textit{min-cut max-flow} theorem for an arbitrary data collector $D$ one can compute, 

\begin{equation*}
B\leq\min_t\min_{\mathcal{G}}\max\mbox{-}flow(s,D).
\end{equation*}

In \cite{capacity}, for an information flow graph, flow analysis is done by taking topological order of failed node connected with data collector. In this paper, we are defining some sequences of nodes and corresponding surviving sets for our model to analyze flow. The definitions are as follows.

\begin{definition}
A set of all possible sequences of nodes in a reconstruction set $\mathcal{A}_j \in \mathcal{A}$ is called reconstruction sequence set and denoted by $\mathscr{A}(\mathcal{A}_j)=\left\{\left\langle U_{\lambda_i}\right\rangle_{i=1}^{|\mathcal{A}_j|}:U_{\lambda_i}\in\mathcal{A}_j\right\}$, where $\left\langle U_{\lambda_i}\right\rangle_{i=1}^{|\mathcal{A}_j|}$ represents a sequence of distinct nodes of set $\mathcal{A}_j\in\mathcal{A}$. Clearly $|\mathscr{A}(\mathcal{A}_j)|$ = $|\mathcal{A}_j|\,!$.
\end{definition}
 For example, in Figure \ref{example}, $\mathscr{A}(\mathcal{A}_3)=\left\{\left\langle U_1, U_4\right\rangle, \left\langle U_4, U_1\right\rangle \right\}$ etc. 
\begin{definition}
For a reconstruction set $\mathcal{A}_j\in\mathcal{A}$, one can define sequences of surviving sets $S^{(\ell)}_{\lambda_i} (\forall i\in[|\mathcal{A}_j|], \exists \ell\in[\tau_{\lambda_i}])$ such that $U_{\lambda_i}\in\mathcal{A}_j$. Surviving sequence associated with node sequence $\left\langle U_{\lambda_i}\right\rangle_{i=1}^{|\mathcal{A}_j|}\in\mathscr{A}$ can be denoted by $\left\langle S^{(\ell)}_{\lambda_i}\right\rangle_{i=1}^{|\mathcal{A}_j|}$. 
\end{definition}
For example, in Figure \ref{example}, a possible surviving sequence for the node sequence $\left\langle U_1, U_4 \right\rangle$ is $\left\langle S_1^{(3)}, S_4^{(2)}\right\rangle$. 
\begin{definition}
Set of all surviving sequences associated with a node sequence $\left\langle U_{f_i}\right\rangle_{i=1}^{|\mathcal{A}_j|}$ can be defined as follows, 
\begin{equation*}
\mathscr{S}\left(\left\langle U_{\lambda_i}\right\rangle_{i=1}^{|\mathcal{A}_j|}\right)=\left\{\left\langle S^{(\ell)}_{\lambda_i}\right\rangle_{i=1}^{|\mathcal{A}_j|}:\exists\ell\in [\tau_{\lambda_i}]\right\}.
\end{equation*} 
 Clearly $\left|\mathscr{S}\left(\left\langle U_{\lambda_i}\right\rangle_{i=1}^{|\mathcal{A}_j|}\right)\right|$ = $\left(\prod_{i=1}^{|\mathcal{A}_j|}\tau_{\lambda_i}\right)\,!$.
\end{definition}
 For example, in Figure \ref{example}, one can see that $\mathscr{S}\left(\left\langle U_1, U_4 \right\rangle\right)=\left\{\left\langle S_1^{(\ell_1)}, S_4^{(\ell_2)}\right\rangle : \exists \ell_1\in[5], \exists \ell_2\in[2] \right\}$ etc.

In \cite{ETT:ETT2887}, Quan et al have given tradeoff curve between system storage cost and system repair cost for heterogeneous DSS with uniform reconstruction degree. Similarly one can give tradeoff curve between system storage cost and system repair cost for heterogeneous DSS model considered in our paper. For our model, we define system storage cost, node storage cost and system repair cost as follows.

\begin{definition} (System storage cost): Total amount of cost $C_s(\vec{\alpha})$ to store unit data in heterogeneous DSS($n, k, d$) is called system storage cost, where storage amount vector $\vec{\alpha}\triangleq(\alpha_1, \alpha_2,\ldots,\alpha_n)$, storage cost vector $\vec{s}\triangleq(s_1, s_2,\ldots, s_n)$, $\alpha_i$ is storage capacity of node $U_i$ and $s_i$ is the cost to store unit information data in node $U_i$ $(\forall i\in[n])$. Clearly
\begin{equation*}
C_s(\vec{\alpha}) = \frac{1}{B}\sum_{j=1}^ns_j\alpha_j
\end{equation*}
\end{definition}
System storage cost $C_s(\vec{\alpha})$ for the example considered in Figure \ref{example} with $\vec{s} = (100, 10, 10, 10, 1)$ is $68$ cost units.


\begin{definition} (Node repair cost): The average amount of cost to repair a node $U_i (i\in[n])$ in heterogeneous DSS($n, k, d$) is called node repair cost $r(\beta_i)$ associated with repair cost vector $\vec{r}\triangleq(r_1, r_2,\ldots, r_n)$ $s.t.$
\begin{equation}
r(\beta_i) = \frac{1}{B\tau_i}\sum_{\ell=1}^{\tau_i}\sum_{\stackrel{j}{ U_j\in S_i^{(\ell)}}}r_j\beta(U_i, U_j, S_i^{(\ell)}),
\end{equation} where $r_j$ is cost to download unit amount of data from node $U_j$ during repair process. Clearly node repair vector $r(\vec{\beta}) \triangleq (r(\beta_1), r(\beta_2),\ldots, r(\beta_n))$.
\end{definition}
In the example considered in Figure \ref{example}, if $\vec{r} = (10, 1, 1, 1, 1)$ then node repair cost vector $r(\vec{\beta})=(r(\beta_1), r(\beta_2), r(\beta_3), r(\beta_4), r(\beta_5))$ = $(\frac{1}{2}, \frac{4}{3}, \frac{1}{2}, \frac{3}{4}, \frac{1}{2})$.



\begin{definition} (System repair cost): System repair cost is total amount of cost to repair all nodes in heterogeneous DSS($n, k, d$) and denoted by $C_r(\vec{\beta})$. Mathematically 
\begin{equation*}
C_r(\vec{\beta}) = \sum_{j=1}^{n}r(\beta_j).
\end{equation*}
\end{definition} 

Clearly $C_r(\vec{\beta})=\frac{43}{12}$ cost unit for $\vec{r} = (10, 1, 1, 1, 1)$ in the example considered in Figure \ref{example}.

Now we give some results and analysis for the $min$-$cut$ bound for the model of heterogeneous DSS considered in this paper in the next section. 

\section{Results}\label{3}
For our model, it is shown that minimum possible value of flexible reconstruction degree is lower bound of cardinality of any \textit{cut set} which separates source node and data collector node. For the heterogeneous DSS \textit{min cut} bound is calculated in Theorem \ref{min cut bound theorem}. Using that \textit{min cut} bound, it is shown that file size should be lower bound of \textit{min cut} bound for the heterogeneous DSS. Using the particular bound as constraint, a bi-objective optimization linear programing problem is formulated to minimize system storage cost and system repair cost for the considered heterogeneous model. A family of solutions is calculated for the optimization problem by substituting some numerical values of system parameters. The numerical parameter is plotted the tradeoff curve between system storage cost and system repair cost. The curve is compared with tradeoff curve for homogeneous DSS \cite{capacity} and tradeoff curve for heterogeneous DSS \cite{ETT:ETT2887}.


\begin{lemma} 
An arbitrary information flow graph $\mathcal{G}=(\mathcal{V}, \mathcal{E})$ with source node $s\in\mathcal{X}$, flexible reconstruction degree $k_t$ associated with data collector node $D\in\overline{\mathcal{X}}$ 
has \begin{equation*}\min_{\mathcal{X}\subset\mathcal{V}}\left\{|cut(\mathcal{X},\overline{\mathcal{X}})|:cut(\mathcal{X},\overline{\mathcal{X}})\neq\phi\right\}\geq \min_{t}\{k_t\},\end{equation*} where $\mathcal{X}\cup\overline{\mathcal{X}}=\mathcal{V}$.
\end{lemma}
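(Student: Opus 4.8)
The plan is to prove that one cannot separate $D$ from $s$ in $\mathcal{G}$ using fewer than $\min_t\{k_t\}$ edges, and to do this by producing that many pairwise edge‑disjoint directed $s$–$D$ paths. By the edge form of the \textit{min-cut max-flow} theorem (Menger's theorem, i.e. \textit{min-cut max-flow} applied with every edge capacity set to $1$), the minimum number of edges in an $s$–$D$ cut equals the maximum number of pairwise edge‑disjoint directed $s$–$D$ paths. Since the graph $\mathcal{G}$ in the statement carries one fixed reconstruction degree $k_t$ and $\min_t\{k_t\}\le k_t$ (the reconstruction degree of the data collector in $\mathcal{G}$), it is enough to exhibit $k_t$ such paths. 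As a sanity remark, the reverse inequality is immediate: the $k_t$ edges $(Out'_{\lambda_j},D)$, $j\in[k_t]$, already form an $s$–$D$ cut, so the minimum‑cardinality cut has at most $k_t$ edges and the lemma is pinned to the right value.

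First I would fix the $k_t$ distinct nodes $U'_{\lambda_1},\ldots,U'_{\lambda_{k_t}}$ that the data collector connects to, so that the only edges entering $D$ are $e_j=(Out'_{\lambda_j},D)$, $j\in[k_t]$, and these are already pairwise distinct. For each $j$ I would grow a directed $s$–$D$ path $P_j$ with last edge $e_j$ by walking backwards through $\mathcal{G}$: from $Out'_{\lambda_j}$ across the storage edge $(In'_{\lambda_j},Out'_{\lambda_j})$ to $In'_{\lambda_j}$; then, because $U_{\lambda_j}$ is repaired by a \emph{non-empty} surviving set $S^{(\ell)}_{\lambda_j}$, the vertex $In'_{\lambda_j}$ has at least one incoming edge, say from $Out'_{\mu_p}$ (or from $Out_{\mu_p}$ at step $0$), which I follow; I then cross that node's storage edge and repeat. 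Because $\mathcal{G}$ is acyclic and layered — a repair inserted at step $j$ draws only on vertices of strictly earlier steps, and a step‑$0$ in‑vertex is reached only from $s$ — the walk is finite and must terminate at some $Out_\mu$ at step $0$, then at $In_\mu$, then at $s$; hence each $P_j$ is a genuine $s$–$D$ path.

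The delicate point, which I expect to be the main obstacle, is to run these $k_t$ backward traversals \emph{together} so that $P_1,\ldots,P_{k_t}$ come out pairwise edge‑disjoint: they are automatically disjoint on the edges $e_j$ and on the storage edges $(In'_{\lambda_j},Out'_{\lambda_j})$, but two traversals can compete for the same repair edge or for the storage edge of a shared helper node. I would handle this by greedy (round‑robin) bookkeeping — process $j=1,\ldots,k_t$ in turn and, each time a traversal reaches an in‑vertex $In'_{\lambda}$, send it out along an incoming edge not yet used by a previously routed path — and the real work is to verify that enough incoming edges remain available at every such vertex, equivalently that the sub‑DAG joining the step‑$0$ layer to the $k_t$ sinks admits $k_t$ edge‑disjoint routes. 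This is where genuine care is required, and it seems to need a mild regularity of the chosen surviving sequences (for instance, excluding chains of singleton surviving sets that funnel several of the $k_t$ sinks through a single common storage edge). Once the $k_t$ edge‑disjoint $s$–$D$ paths are in hand, every non‑empty cut must contain at least one edge of each, so $|cut(\mathcal{X},\overline{\mathcal{X}})|\ge k_t\ge\min_t\{k_t\}$, which is the claim.
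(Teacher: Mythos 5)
Your strategy (Menger / unit-capacity max-flow: exhibit $k_t$ pairwise edge-disjoint $s$--$D$ paths) is a genuinely different route from the paper's, which argues directly about the cut: for $\overline{\mathcal{X}}=\{D\}$ the cut consists of the $k_t\geq\min_t\{k_t\}$ edges into $D$, and for $|\overline{\mathcal{X}}|>1$ the paper argues that any failed node among the chosen $\min_t\{k_t\}$ nodes must be repaired with help from \emph{outside} the remaining $\min_t\{k_t\}-1$ of them, because each of those nodes carries encoded data about a distinct part of the message, so the edge count cannot drop below $\min_t\{k_t\}$. Your reduction from $\min_t\{k_t\}$ to $k_t$ and your backward-traversal construction of a single $s$--$D$ path are both fine.

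However, there is a genuine gap, and you have correctly located it yourself: the pairwise edge-disjointness of the $k_t$ paths is, by Menger's theorem, \emph{exactly equivalent} to the inequality you are trying to prove, so deferring it to ``greedy bookkeeping'' plus an unspecified ``mild regularity'' of the surviving sequences leaves the lemma unproved. Worse, the needed hypothesis is not a mild graph-theoretic one: at the level of the bare layered DAG the claim is false. If two nodes $U_{\lambda_1},U_{\lambda_2}$ of the reconstruction set are each repaired from the singleton surviving set $\{U_\mu\}$ with $U_\mu$ outside the reconstruction set, then both of your backward traversals are forced through the single storage edge $(In_\mu,Out_\mu)$, and the cut $\overline{\mathcal{X}}=\{D,Out'_{\lambda_1},In'_{\lambda_1},Out'_{\lambda_2},In'_{\lambda_2},Out_\mu\}$ has cardinality $1<k_t=2$. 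What rules this out is not a property of the graph but the DSS-level fact the paper invokes: each member of a minimal reconstruction set holds information about a distinct part of the file, so no member can be regenerated from data that is (informationally) already available to the data collector through the others, which forces the repair edges of distinct members to enter through distinct sources of fresh information. To complete your argument you would have to import that fact explicitly and show it yields the spare incoming edges your round-robin routing needs at every in-vertex; as written, the proposal assumes the conclusion at its only nontrivial step.
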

\begin{proof}
Consider a heterogeneous DSS associated with some information flow graphs. For any arbitrary information flow graph $\mathcal{G}=(\mathcal{V}, \mathcal{E})$, $\exists$ $ \mathcal{X}\subset\mathcal{V}$ such that $s\in\mathcal{X}$, $D\in\overline{\mathcal{X}}$. Since information flow graph is connected graph so $cut(\mathcal{X},\overline{\mathcal{X}})\neq\phi$ for any nonempty set $\mathcal{X}$ and $\overline{\mathcal{X}}$. To retrieve the distributed file for the case $\overline{\mathcal{X}}=\{D\}$, one has to connect at least $\min_t\{k_t\}$ number of nodes among $n$ nodes. Hence each node of an arbitrary set of $\min_t\{k_t\}$ number of nodes, has some encoded data of distinct part of massage data. So there are at least $\min_{t}\{k_t\}$ number of edges having end point as node $D$. Mathematically, $|cut(\mathcal{X},\overline{\mathcal{X}})| = \min_{t}\{k_t\}$ for $\overline{\mathcal{X}}=\{D\}$. Again, if $|\overline{\mathcal{X}}|>1$ then some edges in $cut(\mathcal{X},\overline{\mathcal{X}})$ represents downloading process for system repair. In particular, a node failure among the $\min_t\{k_t\}$ number of the specific nodes, can not be repaired by the some subset of the remaining $\min_{t}\{k_t\}-1$ number of the nodes. Reason behind that, each node in the set of $\min_{t}\{k_t\}$ number of nodes has encoded data packets of some unique message data packets. Hence, there must exist some helper nodes other then the $\min_t\{k_t\}-1$ number of nodes for the repair the failed node. So $|cut(\mathcal{X},\overline{\mathcal{X}})|\geq \min_{t}\{k_t\}$. But $\mathcal{X}$ is any arbitrary nonempty subset such that $cut(\mathcal{X},\overline{\mathcal{X}})$ exist, so $|cut(\mathcal{X},\overline{\mathcal{X}})|\geq \min_{t}\{k_t\}$ for all possible $cut(\mathcal{X},\overline{\mathcal{X}})\neq\phi$. This proves the lemma. 
\end{proof}




In a heterogeneous DSS, information delivered to data collector $D$ depends on min\textit{cut-capacity($s, D$)}. The Theorem \ref{min cut bound theorem} gives the lower bound of $\min cut\mbox{-}capacity(s,D)$. 

\begin{theorem}
(min-cut bound) 
For a given heterogeneous DSS with an arbitrary data collector $D$ associated with flexible reconstruction degree $k_t$, the $\min cut\mbox{-}capacity(s,D)$ is bounded below by $\mathscr{Q}$ as given by Equation (\ref{Q}), $i.e.$ 
\begin{equation}
\min cut\mbox{-}capacity(s,D)\geq \mathscr{Q}.
\label{1}
\end{equation}
\label{min cut bound theorem}
\end{theorem}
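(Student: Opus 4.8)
The plan is to adapt the classical information-flow-graph cut argument of Dimakis et al.\ --- used for capacity bounds in \cite{capacity} and, in the dynamic-repair heterogeneous setting, in \cite{DBLP:journals/corr/BenerjeeG14} --- to the present flexible model. Since $\min cut\mbox{-}capacity(s,D)$ is, for each of the $|\mathcal{A}_t|!\,\prod_{i}\tau_{\lambda_i}$ information flow graphs consistent with the data collector $D$, the smallest capacity of a cut $(\mathcal{X},\overline{\mathcal{X}})$ with $s\in\mathcal{X}$ and $D\in\overline{\mathcal{X}}$, it suffices to lower-bound $cut\mbox{-}capacity(\mathcal{X},\overline{\mathcal{X}})$ for one arbitrary graph $\mathcal{G}=(\mathcal{V},\mathcal{E})$ and one arbitrary such cut. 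First I would fix $\mathcal{G}$; this is exactly the choice of a node ordering $\langle U_{\lambda_i}\rangle_{i=1}^{k_t}\in\mathscr{A}(\mathcal{A}_t)$ (read as the topological order in which the newcomers $U'_{\lambda_i}$, i.e.\ the regenerated nodes, are produced, step labels $1$ through $k_t$) together with a surviving sequence $\langle S_{\lambda_i}^{(\ell)}\rangle_{i=1}^{k_t}\in\mathscr{S}(\langle U_{\lambda_i}\rangle_{i=1}^{k_t})$. Because $w(Out'_{\lambda_i},D)\to\infty$ for every $i\in[k_t]$, a cut of finite capacity must keep $Out'_{\lambda_i}\in\overline{\mathcal{X}}$ for all $i$; otherwise $cut\mbox{-}capacity(\mathcal{X},\overline{\mathcal{X}})=\infty$ and the inequality is trivial.

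The core of the argument is to charge, to each index $i\in[k_t]$, a portion of $cut\mbox{-}capacity(\mathcal{X},\overline{\mathcal{X}})$ by a case split on the companion vertex $In'_{\lambda_i}$. If $In'_{\lambda_i}\in\mathcal{X}$, the edge $(In'_{\lambda_i},Out'_{\lambda_i})$ of weight $\alpha_{\lambda_i}$ crosses the cut, contributing $\alpha_{\lambda_i}$. If $In'_{\lambda_i}\in\overline{\mathcal{X}}$, then every incoming edge of $In'_{\lambda_i}$ whose tail lies in $\mathcal{X}$ crosses the cut; these edges emanate from the out-vertices of the helpers $U_{\mu_p}\in S_{\lambda_i}^{(\ell)}$ and carry weights $\beta(U_{\lambda_i},U_{\mu_p},S_{\lambda_i}^{(\ell)})$. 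The bookkeeping point is that a helper's out-vertex sits on the $D$-side ``for free'' only when that helper is one of the newcomers $U_{\lambda_1},\ldots,U_{\lambda_{i-1}}$ already produced before step $i$; every other helper in $S_{\lambda_i}^{(\ell)}$ forces its $\beta$-edge into the cut. Hence the contribution of index $i$ is at least
\[
\min\Bigl\{\,\alpha_{\lambda_i},\ \sum_{\substack{p\,:\,U_{\mu_p}\in S_{\lambda_i}^{(\ell)}\setminus\{U_{\lambda_1},\ldots,U_{\lambda_{i-1}}\}}}\beta\bigl(U_{\lambda_i},U_{\mu_p},S_{\lambda_i}^{(\ell)}\bigr)\,\Bigr\}.
\]
Summing over $i\in[k_t]$ lower-bounds $cut\mbox{-}capacity(\mathcal{X},\overline{\mathcal{X}})$ by a quantity depending only on the chosen ordering and surviving sequence; taking the minimum of this sum over all orderings in $\mathscr{A}(\mathcal{A}_t)$, all surviving sequences in $\mathscr{S}(\langle U_{\lambda_i}\rangle_{i=1}^{k_t})$, and (as $\mathscr{Q}$ ranges over them) over $\mathcal{A}_t$ and $t$ yields exactly $\mathscr{Q}$. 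As the cut was arbitrary, $\min cut\mbox{-}capacity(s,D)\ge\mathscr{Q}$, and one recovers the previous Lemma as the coarse special case in which all weights are at least one.

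I expect the main obstacle to be making the bookkeeping in the case $In'_{\lambda_i}\in\overline{\mathcal{X}}$ fully rigorous. One has to argue, using the layered structure of $\mathcal{G}$, that a helper $U_{\mu_p}$ which is \emph{not} among the earlier newcomers genuinely charges a weight-$\beta$ edge to the cut: the relevant version of such a helper is its step-$0$ vertex $Out_{\mu_p}$, and one must either show that $Out_{\mu_p}\in\mathcal{X}$ may be assumed without lowering the cut capacity, or, if the cut places $Out_{\mu_p}\in\overline{\mathcal{X}}$, account instead for the weight-$\alpha_{\mu_p}$ edge $(In_{\mu_p},Out_{\mu_p})$ (or the infinite edge $(s,In_{\mu_p})$) that this forces --- all while ensuring that no edge of $\mathcal{G}$ is charged to two different indices. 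A milder point is the degenerate case flagged after Table \ref{example table}: when $S_{\lambda_i}^{(\ell)}\subseteq\{U_{\lambda_1},\ldots,U_{\lambda_{i-1}}\}$ the displayed $\beta$-sum is empty and the term is $0$, and one should check that restricting attention to surviving sets that are not supersets of other surviving sets leaves $\mathscr{Q}$ unchanged. Once these are settled, the sum-and-minimize step is routine.
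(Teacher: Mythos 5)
Your proposal follows essentially the same route as the paper's own proof: fix an information flow graph via a node ordering $\langle U_{\lambda_i}\rangle_{i=1}^{k_t}$ and a surviving sequence, split on whether $In'_{\lambda_i}$ lies in $\mathcal{X}$ or $\overline{\mathcal{X}}$, charge either $\alpha_{\lambda_i}$ or the $\beta$-sum over helpers outside $\{U_{\lambda_1},\ldots,U_{\lambda_{i-1}}\}$, and then minimize over all orderings, surviving sequences, and data collectors to obtain $\mathscr{Q}$. The bookkeeping subtlety you flag (a non-newcomer helper whose $Out$-vertex lies in $\overline{\mathcal{X}}$) is likewise left implicit in the paper, which simply restricts attention to cuts with the step-$0$ in-vertices in $\mathcal{X}$ on the grounds that only these yield finite cut capacities.
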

\begin{proof}
Consider a heterogeneous DSS $(n, k, d)$ associated with some information flow graphs. Every information flow graph $\mathcal{G}=(\mathcal{V}, \mathcal{E})$ has a source node $s$, a data collector node $D$ associated with effective reconstruction degree $k_t$. In the heterogeneous DSS, a failed node $U_i$ can be repaired by nodes of some surviving set $S_i^{(\ell)}$, where $\ell\in [\tau_i]$.


Let $\mathcal{X}\subset\mathcal{V}$, $\mathcal{X}\cup\overline{\mathcal{X}}=\mathcal{V}$, $s\in\mathcal{X}$ and $D\in\overline{\mathcal{X}}$ such that some nonempty subset $cut(\mathcal{X}, \overline{\mathcal{X}})\subset\mathcal{E}$ exist. Now if $\mathcal{X}=\mathcal{V}\backslash \{D\}$ then $cut\mbox{-}capacity(\mathcal{X}, \overline{\mathcal{X}})\rightarrow\infty$. Similarly if $\mathcal{X}=\{s\}$ then again $cut\mbox{-}capacity(\mathcal{X}, \overline{\mathcal{X}})\rightarrow\infty$. Hence $\min cut\mbox{-}capacity(\mathcal{X}, \overline{\mathcal{X}})$ would be obtained by all those $Out_j'\in\overline{\mathcal{X}}$ and $In_i\in\mathcal{X}$ since it will give a finite $cut$-$capacity(\mathcal{X},\overline{\mathcal{X}})$, where $i\in[n]$ and $j\in[k_t]$.

Information flow graph $\mathcal{G}=(\mathcal{V}, \mathcal{E})$ is directed acyclic graph so it can be represented in a topological order of its vertices. For the topological order, sequences of node failure and corresponding sequence of surviving sets are arranged by using definitions as given in previous section. For that assume at time $t$, data collector $D$ connects with all nodes of a set $\mathcal{A}_t\in\mathcal{A}$ and reconstruct the file $B$. $\mathcal{A}(\mathcal{A}_t)$ is the set of all possible sequences of nodes of $\mathcal{A}_t\in\mathcal{A}$. A sequence $\left\langle U_{\lambda_i}\right\rangle_{i=1}^{k_t}\in\mathcal{A}(\mathcal{A}_t)$ represents the order of nodes failure of specific set $\mathcal{A}_t$. Recall the set of all possible surviving sequences $\left\langle S_{\lambda_i}^{(\ell)}\right\rangle_{i=1}^{k_t}$ associated with a node sequence $\left\langle U_{\lambda_i}\right\rangle_{i=1}^{k_t}$, is $\mathscr{S}\left(\left\langle U_{\lambda_i}\right\rangle_{i=1}^{k_t}\right)$.  

For a specific node sequence $\left\langle U_{\lambda_i}\right\rangle_{i=1}^{k_t}$ with a specific surviving sequence $\left\langle S^{(\ell)}_{\lambda_i}\right\rangle_{i=1}^{k_t}$, one can analyze the following

For $Out_{\lambda_1}'\in\overline{\mathcal{X}}$ associated with the first node in node sequence $\left\langle U_{\lambda_i}\right\rangle_{i=1}^{k_t}$ , the following two cases are possible.
\begin{itemize}
	\item If $In_{\lambda_1}'\in\mathcal{X}$ then edge $(In_{\lambda_1}', Out_{\lambda_1}')\in cut(\mathcal{X},\overline{\mathcal{X}})$. Hence $\alpha_{\lambda_1}$ will contribute in $cut$-$capacity(\mathcal{X},\overline{\mathcal{X}})$.
	\item If $In_{\lambda_1}'\in\overline{\mathcal{X}}$ then edges $(Out_{\mu_j}, In_{\lambda_1}')\in cut(\mathcal{X},\overline{\mathcal{X}})$, where $U_{\mu_j}\in S_{\lambda_1}^{(\ell)}$ and $S_{\lambda_1}^{(\ell)}\in\left\langle S^{(\ell)}_{\lambda_i}\right\rangle_{i=1}^{k_t}$ for any $\ell\in[\tau_{\lambda_1}]$. Hence this case contribute in $cut$-$capacity(\mathcal{X},\overline{\mathcal{X}})$ by
	\begin{equation*}
	\sum_{\stackrel{\mu_j}{U_{\mu_j}\in S_{\lambda_1}^{(\ell)}}}{\beta\left(U_{\lambda_1}, U_{\mu_j},S_{\lambda_1}^{(\ell)}\right)}.
	\end{equation*}
\end{itemize}
So contribution in $\min cut$-$capacity(\mathcal{X},\overline{\mathcal{X}})$ supported by node $U_{\lambda_1}$ is 
	\begin{equation*}
	\min \left\{\alpha_{\lambda_1}, \sum_{\stackrel{\mu_j}{U_{\mu_j}\in S_{\lambda_1}^{(\ell)}}}{\beta\left(U_{\lambda_1}, U_{\mu_j}, S_{\lambda_1}^{(\ell)}\right)}\right\}.
	\end{equation*}
	
	If a node $U_p(\forall p\in[n])$ fails in the system then all nodes of some surviving set $S_p^{(\ell)}$ will generate a new node $U'_p$ with same characteristic. At a time instant $t$, one of them is in the system. Hence for the remaining part of the proof, we are writing $U_p$ in place of $U'_p$.
	
	For the remaing part of the proof we have used the notation $U_p(\forall p\in[n])$ in place of $U_p'$ since characteristics of both nodes $U_p$ and $U_p'$ are same and one of them appears at instant.
	
In general to compute contribution in $\min cut$-$capacity(\mathcal{X},\overline{\mathcal{X}})$ supported by node $U_{\lambda_i}\in\left\langle U_{\lambda_i}\right\rangle_{i=1}^{k_t}$ assume $Out_{\lambda_i}'\in\overline{\mathcal{X}}$. Again following two cases are possible.
	
	\begin{itemize}
		\item If $In_{\lambda_i}'\in\mathcal{X}$, then edge $(In_{\lambda_i}', Out_{\lambda_i}')\in cut(\mathcal{X},\overline{\mathcal{X}})$. Hence $\alpha_{\lambda_i}$ will contribute in $cut$-$capacity(\mathcal{X},\overline{\mathcal{X}})$.
		\item If $In_{\lambda_i}'\in\overline{\mathcal{X}}$ then all possible edges $(Out_{\mu_j}, In_{\lambda_i}')$ $s.t.$ $U_{\mu_j}\in S_{\lambda_i}^{(\ell)}\backslash\{U_{\lambda_1}, U_{\lambda_2},\ldots, U_{\lambda_{i-1}}\}$ associated $S_{\lambda_i}^{(\ell)}\in\left\langle S^{(\ell)}_{\lambda_i}\right\rangle_{i=1}^{k_t}$ for any $\ell\in [\tau_{\lambda_i}]$, will contribute in $cut(\mathcal{X},\overline{\mathcal{X}})$. Edges $(Out_{\lambda_j}, In_{\lambda_i}')$ associated with node $U_{\mu_j}\in S_{\lambda_i}^{(\ell)}\backslash\{U_{\lambda_1}, U_{\lambda_2},\ldots, U_{\lambda_{i-1}}\}$ are newly investigated from step label $0$ for $cut(\mathcal{X},\overline{\mathcal{X}})$. Edges $(Out_{\lambda_m}', In_{\lambda_i}')$ must be excluded because they have investigated earlier at step label $m$, where $m\in[i-1]$ $s.t.$ $U_{\lambda_m}\in S_{\lambda_i}^{(\ell)}$.		
		Hence this case contribute in $cut$-$capacity(\mathcal{X},\overline{\mathcal{X}})$ by
	\begin{equation*}
	\sum_{\stackrel{\mu_j}{U_{\mu_j}\in S_{\lambda_i}^{(\ell)}\backslash\{U_{\lambda_1}, U_{\lambda_2},\ldots, U_{\lambda_{i-1}}\}}}{\beta\left(U_{\lambda_i}, U_{\mu_j},S_{\lambda_i}^{(\ell)}\right)}.
	\end{equation*}
	
	\end{itemize}
So contribution in $\min cut$-$capacity(\mathcal{X},\overline{\mathcal{X}})$ by node $U_{\lambda_i}$ is 
	\begin{equation*}
	\min \left\{\alpha_{\lambda_i}, \sum_{\stackrel{\mu_j}{U_{\mu_j}\in S_{\lambda_i}^{(\ell)}\backslash\{U_{\lambda_1}, U_{\lambda_2},\ldots,U_{\lambda_{i-1}}\}}}{\beta\left(U_{\lambda_i}, U_{\mu_j}, S_{\lambda_i}^{(\ell)}\right)}\right\}.
	\end{equation*} 
	At the time instant $t$, if data collector $D$ connects with each nodes $U_{\lambda_i}\in\mathcal{A}_t$, $i\in[k_t]$ then for a specific node sequence $\left\langle U_{\lambda_i}\right\rangle_{i=1}^{k_t}$ associated with a specific surviving sequence $\left\langle S_{\lambda_i}^{(\ell)}\right\rangle_{i=1}^{k_t}$ the contribution in $\min cut$-$capacity(\mathcal{X},\overline{\mathcal{X}})$ is 
		\begin{equation*}
	\sum_{i=1}^{k_t}\min \left\{\alpha_{\lambda_i}, \sum_{\stackrel{\mu_j}{U_{\mu_j}\in S_{\lambda_i}^{(\ell)}\backslash\{U_{\lambda_1}, U_{\lambda_2},\ldots,U_{\lambda_{i-1}}\}}}{\beta\left(U_{\lambda_i}, U_{\mu_j},S_{\lambda_i}^{(\ell)}\right)}\right\}.
	\end{equation*} 
		
Now one can find $\min cut$-$capacity(s,D)$ for a specific $D$ by taking minimum among all possible $cut$-$capacity(\mathcal{X},\overline{\mathcal{X}})$ which is calculated for all possible node sequences $\left\langle U_{\lambda_i}\right\rangle_{i=1}^{k_t}$ among all possible associated surviving sequences $\left\langle S_{\lambda_i}^{(\ell)}\right\rangle_{i=1}^{k_t}$. The min$cut$-$capacity(s,D)$ is the minimum value of the particular min$cut$-$capacity(s,D)$ calculated for all possible specific $D$. For a given heterogeneous DSS, associated with any arbitrary data collector $D$ one can find $\min cut$-$capacity(s,D)$ as Inequality (\ref{1}) by using Equation (\ref{equation 1}). The particular Equation (\ref{equation 1}) holds because index $\lambda_i$ of storage node capacity is governed by index $\lambda_i$ of nodes in node sequence $\left\langle U_{\lambda_i}\right\rangle_{i=1}^{k_t}$. 

\begin{figure*}
\begin{equation}
\begin{split}
\min_{\left\langle U_{\lambda_i}\right\rangle_{i=1}^{k_t}\in\mathscr{A}(\mathcal{A}_t)}&\min_{\left\langle S_{\lambda_i}^{(\ell)}\right\rangle_{i=1}^{k_t}\in\mathscr{S}\left(\left\langle U_{\lambda_i}\right\rangle_{i=1}^{k_t}\right)}\sum_{\stackrel{i=1}{U_{\lambda_i}\in\mathcal{A}_t}}^{k_t}\min\left\{\alpha_{\lambda_i}, \sum_{\stackrel{\mu_j}{U_{\mu_j}\in S_{\lambda_i}^{(\ell)}\backslash\{U_{\lambda_1},\ldots,U_{\lambda_{i-1}}\}}}{\beta\left(U_{\lambda_i}, U_{\mu_j}, S_{\lambda_i}^{(\ell)}\right)}\right\} \\
=&\min_{\left\langle U_{\lambda_i}\right\rangle_{i=1}^{k_t}\in\mathscr{A}(\mathcal{A}_t)}\sum_{\stackrel{i=1}{U_{\lambda_i}\in\mathcal{A}_t}}^{k_t}\min\left\{\alpha_{\lambda_i},\min_{\left\langle S_{\lambda_i}^{(\ell)}\right\rangle_{i=1}^{k_t}\in\mathscr{S}\left(\left\langle U_{\lambda_i}\right\rangle_{i=1}^{k_t}\right)}\sum_{\stackrel{\mu_j}{U_{\mu_j}\in S_{\lambda_i}^{(\ell)}\backslash\{U_{\lambda_1},\ldots,U_{\lambda_{i-1}}\}}}{\beta\left(U_{\lambda_i}, U_{\mu_j}, S_{\lambda_i}^{(\ell)}\right)}\right\}
\label{equation 1}
\end{split}
\end{equation}
\end{figure*}

One can easily observe that the bound is tight since the \textit{min-cut} bound is calculated by taking the minimum value of all possible \textit{cut} bounds. Hence one can say the following.

The $min$-$cut$ bound is calculated for all possible node sequences $\left\langle U_{\lambda_i}\right\rangle_{i=1}^{k_t}$ associated with all possible surviving sequences $\left\langle S_{\lambda_i}^{(\ell)}\right\rangle_{i=1}^{k_t}$. Hence there exist at least one surviving sequences, say, $\left\langle S_{\lambda_i}^{(\ell^*)}\right\rangle_{i=1}^{k_t}$ associated with node sequence, say, $\left\langle U_{\lambda_i}^*\right\rangle_{i=1}^{k_t}$ for which the inequality holds with equality $i.e.$ the $min$-$cut$ bound Inequality (\ref{1}) is tight.
\end{proof}

\begin{figure*}
\begin{equation}
\mathscr{Q}=\min_{\stackrel{\mathcal{A}_t\in\mathcal{A}}{|\mathcal{A}_t|=k_t}}\min_{\left\langle U_{\lambda_i}\right\rangle_{i=1}^{k_t}\in\mathscr{A}(\mathcal{A}_t)}\sum_{\stackrel{i=1}{U_{\lambda_i}\in\mathcal{A}_t}}^{k_t}\min\left\{\alpha_{\lambda_i},\min_{\left\langle S_{\lambda_i}^{(\ell)}\right\rangle_{i=1}^{k_t}\in\mathscr{S}\left(\left\langle U_{\lambda_i}\right\rangle_{i=1}^{k_t}\right)}\sum_{\stackrel{\mu_j}{U_{\mu_j}\in S_{\lambda_i}^{(\ell)}\backslash\{U_{\lambda_1},\ldots,U_{\lambda_{i-1}}\}}}{\beta\left(U_{\lambda_i}, U_{\mu_j}, S_{\lambda_i}^{(\ell)}\right)}\right\}
\label{Q}
\end{equation}
\end{figure*}

\begin{remark}
For a given heterogeneous DSS, at time $t$, if an arbitrary data collector connects each node $U_{\lambda_j}$ in subset $\mathcal{A}_t\in\mathcal{A}$ then total number of possible information flow graphs are given by
\begin{equation*}
\sum_{\stackrel{\mathcal{A}_t}{\mathcal{A}_t\in\mathcal{A}}}\left(|\mathcal{A}_t|\,!\prod_{j=1}^{|\mathcal{A}_t|}\tau_{\lambda_j}\right).
\end{equation*}
In particular, for a specific information flow graph, the total number of computational comparisons are $2^{|\mathcal{A}_t|}$. Hence One can say that the time complexity to calculate $min$-$cut$ bound is 
\begin{equation*}
\mathcal{O}\left(\sum_{\stackrel{\mathcal{A}_t}{\mathcal{A}_t\in\mathcal{A}}}\left(2^{|\mathcal{A}_t|}(|\mathcal{A}_t|\,!)\prod_{j=1}^{|\mathcal{A}_t|}\tau_{\lambda_j}\right)\right).
\end{equation*}
\end{remark}

By Theorem \ref{min cut bound theorem} one can calculate the minimum requirement of storage node capacity and repair bandwidth to store a file with size $B$. In other words the upper bound of stored file with size $B$ is given by the following lemma.
\begin{lemma}
If a file with size $B$ is stored in a given heterogeneous DSS $(n, k, d)$ then 
\begin{equation}
B\leq\mathscr{Q},
\label{reduced min cut bound}
\end{equation} 
where$\mathscr{Q}$ is given in Equation ($\ref{Q}$) and remaining used notations have common meaning as defined in previous sections.

\label{lemma}
\end{lemma}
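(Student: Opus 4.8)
The plan is to obtain Inequality (\ref{reduced min cut bound}) as a direct corollary of Theorem \ref{min cut bound theorem} together with the \emph{min-cut max-flow} theorem. First I would recall the basic reconstruction requirement already noted in Section \ref{2}: because any data collector $D$ must recover the whole file of size $B$ from whichever information flow graph $\mathcal{G}=(\mathcal{V},\mathcal{E})$ is realized (one per choice of reconstruction set $\mathcal{A}_t$, node sequence, and surviving sequence), the value of flow delivered to $D$ must be at least $B$ in every such $\mathcal{G}$. Hence $B\leq\max\mbox{-}flow(s,D)$ for each $\mathcal{G}$, and taking the minimum over all such graphs and all time instants, $B\leq\min_{t}\min_{\mathcal{G}}\max\mbox{-}flow(s,D)$.

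Next I would apply the \emph{min-cut max-flow} theorem graph-by-graph to rewrite this as $B\leq\min_{t}\min_{\mathcal{G}}\min cut\mbox{-}capacity(s,D)$. By Theorem \ref{min cut bound theorem} the right-hand side is bounded below by $\mathscr{Q}$, but more importantly the tightness argument closing that proof exhibits a reconstruction set, a node sequence $\left\langle U_{\lambda_i}^{*}\right\rangle_{i=1}^{k_t}$, and a surviving sequence $\left\langle S_{\lambda_i}^{(\ell^{*})}\right\rangle_{i=1}^{k_t}$ whose associated information flow graph attains $\min cut\mbox{-}capacity(s,D)=\mathscr{Q}$, since $\mathscr{Q}$ is by construction the minimum in Equation (\ref{Q}) taken over exactly these configurations. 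Therefore $\min_{t}\min_{\mathcal{G}}\min cut\mbox{-}capacity(s,D)=\mathscr{Q}$, and combining with the previous inequality yields $B\leq\mathscr{Q}$.

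The only point that needs care is the identification $\min_{t}\min_{\mathcal{G}}\min cut\mbox{-}capacity(s,D)=\mathscr{Q}$: one must be sure that the worst-case (smallest max-flow) information flow graph is among those enumerated by the minima over $\mathcal{A}_t$, over node sequences in $\mathscr{A}(\mathcal{A}_t)$, and over surviving sequences in $\mathscr{S}\!\left(\left\langle U_{\lambda_i}\right\rangle_{i=1}^{k_t}\right)$, so that $\mathscr{Q}$ is not merely a lower bound but is realized. This is precisely what the per-graph cut-capacity computation in the proof of Theorem \ref{min cut bound theorem} establishes, so I would cite it rather than repeat it; the remainder is a one-line chain of inequalities together with the elementary reconstruction bound, and I do not anticipate any further obstacle.
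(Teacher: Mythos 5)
Your proof is correct and follows essentially the same route as the paper's own (three-line) proof: the reconstruction requirement gives $B\leq\max\mbox{-}flow(s,D)$ for every information flow graph, the \emph{min-cut max-flow} theorem converts this into a cut-capacity bound, and Theorem \ref{min cut bound theorem} identifies the resulting minimum over all graphs with $\mathscr{Q}$. You are in fact more explicit than the paper, which only says ``using min-cut max-flow theorem and Theorem \ref{min cut bound theorem} one can prove the lemma'': your observation that one must check $\mathscr{Q}$ is \emph{attained} by some enumerated configuration (so that the inequality from Theorem \ref{min cut bound theorem} is used in the right direction) is precisely the point supplied by the tightness remark at the end of the paper's proof of that theorem.
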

\begin{proof}
Any arbitrary data collector node $D$ must be able to reconstruct the whole file with size $B$. Hence maximum information flow value delivered to any data collector, should be at least $B$. Now using $min$-$cut$ $max$-$flow$ theorem and Theorem \ref{min cut bound theorem} one can prove the lemma.
\end{proof}

\begin{example}
The $\min cut$-$capacity(s,D)$ for the information flow graph as shown in Figure \ref{flow graph example}, will be $\min\left\{\alpha_1, \beta\left(U_1, U_2, S_1^{(1)}\right)+\beta\left(U_1, U_4, S_1^{(1)}\right)\right\}$ + $\min\left\{\alpha_2, \beta\left(U_2, U_4, S_2^{(1)}\right)\right\}$ + $\min\left\{\alpha_3, \beta\left(U_3, U_4, S_3^{(1)}\right)\right\}$=$2+1+2=5$ units.
\end{example}

Now one can frame a optimization problem to find minimum system storage cost and system repair cost under the constraint that the maximum possible information deliver to data collector node $D$ is at lest $B$.  
\begin{problem}
\begin{equation*}
\begin{split}
& \mbox{Minimize: }[C_s(\vec{\alpha}), C_r(\vec{\beta})]\\
& \mbox{subject to}\\
&\hspace{17mm}\mbox{ Inequality } (\ref{reduced min cut bound}); \\
&\hspace{23mm}\alpha_i\geq 0;\\
&\hspace{8mm}\beta\left(U_i, U_j, S_{i}^{(\ell)}\right)\geq 0; 
\end{split}
\end{equation*}
where $i\in[n]$, $\ell\in[\tau_i]$ and $U_j \in S_{i}^{(\ell)}$ for some $j \in [n]\backslash\{i\}$.
\label{equation}
\end{problem}

Optimum values for the both objective functions of bi-objective optimization Problem \ref{equation} are plotted as tradeoff curve between $C_s(\vec{\alpha})$ and $C_r(\vec{\beta})$. In this paper the optimization Problem \ref{equation} is solved by weighted sum method for some numeric example.

Some specific cases for optimization Problem \ref{equation} are analyzed in the following subsection.
\subsection{Some Specific Cases}
Considered heterogeneous DSS can be reduced to following cases under some specific restrictions. The cases are as follow:  

	1) (Uniform Reconstruction):
At time $t$, if an arbitrary data collector can retrieve the file by downloading data from exactly $k$ nodes for any combination out of $n$ nodes then the constraint Inequality (\ref{reduced min cut bound}) for the optimization Problem \ref{equation} has additional property $k_t=k, \forall t$.

2) (Uniform Repair Degree):
For a heterogeneous DSS let a node failure can repair by \textit{any} $d$ nodes out of remaining $n-1$ nodes. Under the particular assumption the constraint Inequality (\ref{reduced min cut bound}) for the optimization Problem \ref{equation} reduced to
\begin{problem}
\begin{equation*}
\begin{split}
& \mbox{Minimize: }[C_s(\vec{\alpha}), C_r(\vec{\beta})] \\
& \mbox{subject to} \\
& B\leq\min_{\stackrel{\mathcal{A}_t\in\mathcal{A}}{|\mathcal{A}_t|=k_t}}\sum_{\stackrel{i=1;}{U_{\lambda_i}\in\mathcal{A}_t}}^{k_t}\min\left\{\alpha_{\lambda_i},\sum_{\mu_j}{\beta\left(U_{\lambda_i}, U_{\mu_j}, S_{\lambda_i}^{(\ell)}\right)}\right\}; \\
& 0\leq\alpha_1\leq\alpha_2\leq\ldots\leq\alpha_n; \\
& 1\leq \lambda_1\leq \lambda_2\leq\ldots\leq \lambda_{k_t}\leq n; \\
\end{split}
\end{equation*}
where 
index $\mu_j$ is the index of node $U_{\mu_j}\in S_{\lambda_i}^{(\ell)}\backslash\{U_{\lambda_1},\ldots,U_{\lambda_{i-1}}\}$ such that $\{U_{\lambda_1},\ldots,U_{\lambda_{i-1}}\}\subset S_{\lambda_i}^{(\ell)}$, $\forall i\in[k_t]$ and some $j\in[d]$.
\end{problem}

In this case $|S_m^{(\ell)}|=d$, $\tau_m={n-1 \choose d},\forall m\in[n]$. Here $\min cut$-$capacity(s,D)$ will be given by the node sequence $\left\langle U_{\lambda_i}\right\rangle_{i=1}^{k_t}\in\mathcal{A}(\mathcal{A}_t)$ associated with surviving sequence $\left\langle S_{\lambda_i}^{(\ell)}\right\rangle_{i=1}^{k_t}$ such that $\alpha_{\lambda_1}\leq\alpha_{\lambda_2}\leq\ldots\leq\alpha_{\lambda_{k_t}}$ and $\{U_{\lambda_1},U_{\lambda_2},\ldots,U_{\lambda_{i-1}}\}\subset S_{\lambda_i}^{(\ell)}$.

3) (Uniform Repair Download Amount): 
In this case we assume that downloaded amount from any arbitrary helper node to repair the system is constant say $\beta$. Hence optimization Problem \ref{equation} under the restriction has additional properties as $\beta\left(U_i, U_j, S_{i}^{(\ell)}\right) = \beta$, $\beta\geq 0$ $(\forall i\in[n]$, all possible $j\in[n]\backslash \{i\}$ and $\forall\ell\in[\tau_i])$.



4) (Homogenous DSS):
In heterogeneous DSS become a homogeneous DSS if characteristics of parameters are uniform. Hence assume effective reconstruction degree for any data collector is $k$ and storage capacity of each node is $\alpha$. In addition let a node failure can repair by \textit{any} $d$ nodes out of remaining $n-1$ nodes by downloading $\beta$ packets from each helper node. Under these restrictions, the constraint Inequalities (\ref{reduced min cut bound}) for the optimization Problem \ref{equation} reduced to  
\begin{problem}
\begin{equation*}
\begin{split}
 \mbox{Minimize: }&[C_s(\vec{\alpha}), C_r(\vec{\beta})]\\
 \mbox{subject to}&\\
 B&\leq\sum_{i=1}^{k}\min\left\{\alpha,\left(d-i-1\right)\beta\right\};\\ 
\alpha&\geq 0;\\
\beta&\geq 0.
\end{split}
\end{equation*}
\end{problem}

5) (Other):
In this paper, the considered heterogeneous DSS model can be reduced into some more specific DSS by applying some appropriate restrictions on constraints. 
For example, heterogeneous DSS with uniform reconstruction and uniform repair degree (case $1$ and $2$ respectively) collectively reduces to heterogeneous DSS as investigated in \cite{ETT:ETT2887}.

One can easily find solution of the bi-objective optimization Problem (\ref{equation}) for some numerical values and plot the solution as tradeoff curve for the same. One can compare the tradeoff curve with the tradeoff curve for the existing heterogeneous DSS investigated in \cite{ETT:ETT2887}. Hence in the next section we are calculating some optimum solutions for numerical parameter for our model and comparing it with homogeneous model \cite{capacity} and heterogeneous model \cite{ETT:ETT2887}. 
\subsection{Numerical Work}
For the optimization Problem (\ref{equation}), LP problems with single objective function is solved. The single objective function is calculated by taking linear combination of the two objective functions of optimization problem (\ref{equation}). 
Ten such LP problems are solved by taking distinct linear combination factor between $10^{-3}$ and $10^3$. Plotting tradeoff and solving LP problems are done with the help of `MATLAB' and `lp$\_$solve' \cite{opensource}.

In Figure \ref{Graph 2}, four tradeoff curves are plotted between system repair cost $C_r$ and system storage cost $C_s$ for the respective DSSs. In particular Figure \ref{Graph 2}, one curve is plotted for homogeneous DSS as investigated in \cite{capacity}, another one is drown for a heterogeneous DSS as investigated in \cite{ETT:ETT2887} and remaining two curves are plotted for two heterogeneous DSSs as studied in this paper. In particular, one of the remaining two curves has minimum effective reconstruction degree $k_{\min}$ is $2$ and other has maximum effective reconstruction degree $k_{\max}$ is $2$. For all considered DSSs the common parameters are as follow: $n = 4$, $B = 1$ unit, $\vec{s} = (1\ 10\ 10\ 100)$ and  $\vec{r} = (10\ 1\ 1\ 1)$. For homogeneous DSS and heterogeneous DSS studied in \cite{ETT:ETT2887} have reconstruction degree $k$ = $2$ and repair degree $d$ = $3$. Remaining both heterogeneous DSS have surviving sets $S_1^{(1)}$ = $\{U_2, U_3, U_4\}$, $S_2^{(1)}$ = $\{U_1, U_4\}$, $S_3^{(1)}$ = $\{U_1, U_2\}$, $S_4^{(1)}$ = $\{U_2, U_3\}$.

In Figure \ref{Graph 2}, one can see that our heterogeneous DSS model has more optimum system storage and repair cost then the homogeneous DSS studied in \cite{capacity}. Although the characteristics of our heterogeneous model and heterogeneous model investigated in \cite{ETT:ETT2887} are different, but we obtained some more optimum points for our model as in Figure \ref{Graph 2}. It is shown in the last subsection that one can find heterogeneous DSS considered in \cite{ETT:ETT2887} by taking some restrictions on our model.
\begin{remark}
In the particular tradeoff curves, non-integer solution of bi-objective optimization problem \ref{equation} is also considered. Since the scaling of an arbitrary file size $B$ to $1$, leads to respective integer solution that is not necessarily scale to some integer.
\end{remark}

\begin{figure}
\centering
\includegraphics[scale=0.65]{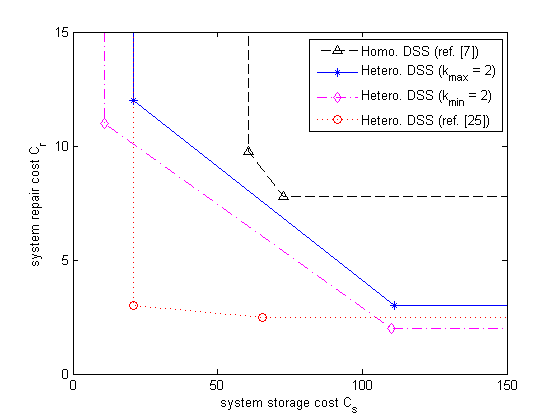}
\caption{
For various DSSs the optimal tradeoff curve is plotted between system repair cost $C_r$ and system storage cost $C_s$. 
}
\label{Graph 2}
\end{figure}
\section{Conclusion}
\label{4}

In this paper, we proposed a model of heterogeneous DSS with dynamic reconstruction degree, storage node capacity and repair bandwidth. In particular, at time $t$, a file can be reconstructed using 
certain set of nodes and  system is repaired for any failed node by contacting some set of helper nodes. For such heterogeneous DSS, the fundamental tradeoff curve between system repair cost and system storage cost is investigated. To plot the tradeoff curve, a bi-objective optimization problem is formulated with the constraints of \textit{min-cut} bound and non-negative parameters of the heterogeneous DSS. The bi-objective optimization problem is solved by weighted sum method for some numerical values of parameters of the heterogeneous model.
Analyzing the tradeoff curve, we observed some more optimum points then the existing heterogeneous model \cite{ETT:ETT2887}. The considered model is close to real world scenario. Our heterogeneous model is  flexible enough to mold it into any existing heterogeneous or homogeneous DSS by considering appropriate restrictions. It would be interesting to construct codes achieving the optimum points on the tradeoff curve.


\bibliographystyle{IEEEtran}
\bibliography{cloud}

\begin{thebibliography}{10}
\providecommand{\url}[1]{#1}
\csname url@samestyle\endcsname
\providecommand{\newblock}{\relax}
\providecommand{\bibinfo}[2]{#2}
\providecommand{\BIBentrySTDinterwordspacing}{\spaceskip=0pt\relax}
\providecommand{\BIBentryALTinterwordstretchfactor}{4}
\providecommand{\BIBentryALTinterwordspacing}{\spaceskip=\fontdimen2\font plus
\BIBentryALTinterwordstretchfactor\fontdimen3\font minus
  \fontdimen4\font\relax}
\providecommand{\BIBforeignlanguage}[2]{{%
\expandafter\ifx\csname l@#1\endcsname\relax
\typeout{** WARNING: IEEEtran.bst: No hyphenation pattern has been}%
\typeout{** loaded for the language `#1'. Using the pattern for}%
\typeout{** the default language instead.}%
\else
\language=\csname l@#1\endcsname
\fi
#2}}
\providecommand{\BIBdecl}{\relax}
\BIBdecl

\bibitem{XorbasVLDB}
M.~Sathiamoorthy, M.~Asteris, D.~Papailiopoulos, A.~G. Dimakis, R.~Vadali,
  S.~Chen, and D.~Borthakur, ``Xoring elephants: Novel erasure codes for big
  data,'' \emph{Proceedings of the VLDB Endowment (to appear)}, 2013.

\bibitem{Huang:2012:ECW:2342821.2342823}
\BIBentryALTinterwordspacing
C.~Huang, H.~Simitci, Y.~Xu, A.~Ogus, B.~Calder, P.~Gopalan, J.~Li, and
  S.~Yekhanin, ``Erasure coding in windows azure storage,'' in
  \emph{Proceedings of the 2012 USENIX conference on Annual Technical
  Conference}, ser. USENIX ATC'12.\hskip 1em plus 0.5em minus 0.4em\relax
  Berkeley, CA, USA: USENIX Association, 2012, pp. 2--2. [Online]. Available:
  \url{http://dl.acm.org/citation.cfm?id=2342821.2342823}
\BIBentrySTDinterwordspacing

\bibitem{skydrive}
\BIBentryALTinterwordspacing
Microsoft, ``{S}ky{D}rive {L}ive,'' Jan. 2013. [Online]. Available:
  \url{https://skydrive.live.com/}
\BIBentrySTDinterwordspacing

\bibitem{amazonec2}
\BIBentryALTinterwordspacing
Amazon, ``Amazon elastic compute cloud ({A}mazon {EC2}),'' Jan. 2013. [Online].
  Available: \url{http://aws.amazon.com/ec2/}
\BIBentrySTDinterwordspacing

\bibitem{dgwr7}
A.~Dimakis, P.~Godfrey, M.~Wainwright, and K.~Ramchandran, ``Network coding for
  distributed storage systems,'' in \emph{INFOCOM 2007. 26th IEEE International
  Conference on Computer Communications. IEEE}, May 2007, pp. 2000 --2008.

\bibitem{5550492}
A.~Dimakis, P.~Godfrey, Y.~Wu, M.~Wainwright, and K.~Ramchandran, ``Network
  coding for distributed storage systems,'' \emph{Information Theory, IEEE
  Transactions on}, vol.~56, no.~9, pp. 4539--4551, Sept 2010.

\bibitem{capacity}
Y.~Wu, R.~Dimakis, and K.~Ramchandran, ``Deterministic regenerating codes for
  distributed storage,'' in \emph{The Allerton Conference on Communication,
  Control and Computing (Urbana-Champaign)}, 2007.

\bibitem{5206008}
Y.~Wu, ``Existence and construction of capacity-achieving network codes for
  distributed storage,'' in \emph{Information Theory, 2009. ISIT 2009. IEEE
  International Symposium on}, 28 2009-july 3 2009, pp. 1150 --1154.

\bibitem{journals/jsac/Wu10}
\BIBentryALTinterwordspacing
------, ``Existence and construction of capacity-achieving network codes for
  distributed storage.'' \emph{IEEE Journal on Selected Areas in
  Communications}, vol.~28, no.~2, pp. 277--288, 2010. [Online]. Available:
  \url{http://dblp.uni-trier.de/db/journals/jsac/jsac28.html#Wu10}
\BIBentrySTDinterwordspacing

\bibitem{6874990}
S.~Goparaju, S.~El~Rouayheb, and R.~Calderbank, ``New codes and inner bounds
  for exact repair in distributed storage systems,'' in \emph{Information
  Theory (ISIT), 2014 IEEE International Symposium on}, June 2014, pp.
  1036--1040.

\bibitem{5513353}
N.~Shah, K.~Rashmi, and P.~Vijay~Kumar, ``A flexible class of regenerating
  codes for distributed storage,'' in \emph{Information Theory Proceedings
  (ISIT), 2010 IEEE International Symposium on}, june 2010, pp. 1943 --1947.

\bibitem{DARKYC11}
A.~Dimakis, K.~Ramchandran, Y.~Wu, and C.~Suh, ``A survey on network codes for
  distributed storage,'' \emph{Proceedings of the IEEE}, vol.~99, no.~3, pp.
  476 --489, march 2011.

\bibitem{Akhlaghi20102105}
\BIBentryALTinterwordspacing
S.~Akhlaghi, A.~Kiani, and M.~R. Ghanavati, ``Cost-bandwidth tradeoff in
  distributed storage systems,'' \emph{Computer Communications}, vol.~33,
  no.~17, pp. 2105 -- 2115, 2010, special Issue:Applied sciences in
  communication technologies. [Online]. Available:
  \url{http://www.sciencedirect.com/science/article/pii/S0140366410003506}
\BIBentrySTDinterwordspacing

\bibitem{2015arXiv150103983P}
N.~{Prakash} and M.~{Nikhil Krishnan}, ``{The Storage-Repair-Bandwidth
  Trade-off of Exact Repair Linear Regenerating Codes for the Case $d = k =
  n-1$},'' \emph{ArXiv e-prints}, Jan. 2015.

\bibitem{Kubiatowicz00oceanstore:an}
J.~Kubiatowicz, D.~Bindel, Y.~Chen, S.~Czerwinski, P.~Eaton, D.~Geels,
  R.~Gummadi, S.~Rhea, H.~Weatherspoon, W.~Weimer, C.~Wells, and B.~Zhao,
  ``Oceanstore: An architecture for global-scale persistent storage,'' 2000,
  pp. 190--201.

\bibitem{DBLP:journals/ett/BianchiM96}
\BIBentryALTinterwordspacing
G.~Bianchi and R.~Melen, ``Performance and dimensioning of a hierarchical video
  storage network for interactive video services,'' \emph{European Transactions
  on Telecommunications}, vol.~7, no.~4, pp. 349--358, 1996. [Online].
  Available: \url{http://dx.doi.org/10.1002/ett.4460070407}
\BIBentrySTDinterwordspacing

\bibitem{DBLP:conf/acssc/PawarRZLR11}
\BIBentryALTinterwordspacing
S.~Pawar, S.~Y.~E. Rouayheb, H.~Zhang, K.~Lee, and K.~Ramchandran, ``Codes for
  a distributed caching based video-on-demand system,'' in \emph{Conference
  Record of the Forty Fifth Asilomar Conference on Signals, Systems and
  Computers, {ACSCC} 2011, Pacific Grove, CA, USA, November 6-9, 2011}, 2011,
  pp. 1783--1787. [Online]. Available:
  \url{http://dx.doi.org/10.1109/ACSSC.2011.6190328}
\BIBentrySTDinterwordspacing

\bibitem{DBLP:journals/corr/abs-1202-1596}
\BIBentryALTinterwordspacing
V.~Ntranos, G.~Caire, and A.~G. Dimakis, ``Allocations for heterogenous
  distributed storage,'' \emph{CoRR}, vol. abs/1202.1596, 2012. [Online].
  Available: \url{http://arxiv.org/abs/1202.1596}
\BIBentrySTDinterwordspacing

\bibitem{6736541}
Z.~Li, T.~Ho, D.~Leong, and H.~Yao, ``Distributed storage allocation for
  heterogeneous systems,'' in \emph{Communication, Control, and Computing
  (Allerton), 2013 51st Annual Allerton Conference on}, Oct 2013, pp. 320--326.

\bibitem{6170563}
D.~Leong, A.~Dimakis, and T.~Ho, ``Distributed storage allocations,''
  \emph{Information Theory, IEEE Transactions on}, vol.~58, no.~7, pp.
  4733--4752, July 2012.

\bibitem{6033777}
M.~Gerami, M.~Xiao, and M.~Skoglund, ``Optimal-cost repair in multi-hop
  distributed storage systems,'' in \emph{Information Theory Proceedings
  (ISIT), 2011 IEEE International Symposium on}, July 2011, pp. 1437--1441.

\bibitem{6620426}
T.~Ernvall, S.~El~Rouayheb, C.~Hollanti, and H.~Poor, ``Capacity and security
  of heterogeneous distributed storage systems,'' in \emph{Information Theory
  Proceedings (ISIT), 2013 IEEE International Symposium on}, July 2013, pp.
  1247--1251.

\bibitem{6620424}
J.~Pernas, C.~Yuen, B.~Gaston, and J.~Pujol, ``Non-homogeneous two-rack model
  for distributed storage systems,'' in \emph{Information Theory Proceedings
  (ISIT), 2013 IEEE International Symposium on}, July 2013, pp. 1237--1241.

\bibitem{DBLP:journals/corr/BenerjeeG14}
\BIBentryALTinterwordspacing
K.~G. Benerjee and M.~K. Gupta, ``On heterogeneous regenerating codes and
  capacity of distributed storage systems,'' \emph{CoRR}, vol. abs/1402.3801,
  2014. [Online]. Available: \url{http://arxiv.org/abs/1402.3801}
\BIBentrySTDinterwordspacing

\bibitem{ETT:ETT2887}
\BIBentryALTinterwordspacing
Q.~Yu, K.~W. Shum, and C.~W. Sung, ``Tradeoff between storage cost and repair
  cost in heterogeneous distributed storage systems,'' \emph{Transactions on
  Emerging Telecommunications Technologies}, pp. n/a--n/a, 2014. [Online].
  Available: \url{http://dx.doi.org/10.1002/ett.2887}
\BIBentrySTDinterwordspacing

\bibitem{Ahlswede00networkinformation}
R.~Ahlswede, N.~Cai, S.~yen Robert~Li, and R.~W. Yeung, ``Network information
  flow,'' \emph{IEEE TRANSACTIONS ON INFORMATION THEORY}, vol.~46, no.~4, pp.
  1204--1216, 2000.

\bibitem{Elias:1956}
P.~Elias, A.~Feinstein, and C.~Shannon, ``A note on the maximum flow through a
  network,'' \emph{Information Theory, IEEE Transactions on}, vol.~2, no.~4,
  pp. 117--119, Dec 1956.

\bibitem{Ford-Fulkerson_algo}
\BIBentryALTinterwordspacing
L.~R. Ford and D.~R. Fulkerson, ``{Maximal Flow through a Network.}''
  \emph{Canadian Journal of Mathematics}, vol.~8, pp. 399--404. [Online].
  Available: \url{http://www.rand.org/pubs/papers/P605/}
\BIBentrySTDinterwordspacing

\bibitem{opensource}
M.~Berkelaar, P.~Eikland, and P.~Notebaert, ``lp$\_$solve-a mixed integer
  linear programming (milp) solver,'' Version 5.5.2.0,2011. Available from:
  http://lpsolve.sourceforge.net/5.5/.

\end{thebibliography}
\end{document}